\newtheorem{prop}{Proposition}
\newtheorem{remark}{Remark}
\newtheorem{assumption}{Assumption}
\title{Toward Value-oriented Renewable Energy Forecasting: An Iterative Learning Approach}
\author{Yufan Zhang, Mengshuo Jia, Honglin Wen~\IEEEmembership{Member,~IEEE,} Yuexin Bian, Yuanyuan Shi~\IEEEmembership{Member,~IEEE.}
\thanks{Yufan Zhang, Yuexin Bian, and Yuanyuan Shi are with the Department of Electrical and Computer Engineering, University of California San Diego. Their work are partially supported by the CA Climate Action, Resilience, and Environmental Sustainability (CARES) grant from UC San Diego. 
Mengshuo Jia is with the Power Systems Laboratory, ETH Z\"urich, 8092 Z\"urich, Switzerland (jia@eeh.ee.ethz.ch). Honglin Wen is with the Department of Electrical Engineering, Shanghai Jiao Tong University. 
Co-corresponding authors: Honglin Wen (linlin00@sjtu.edu.cn) and Yuanyuan Shi (yyshi@ucsd.edu).}
\vspace{-2em}}
\begin{document}

\maketitle
\thispagestyle{empty}
\pagestyle{plain}







\begin{abstract}
Energy forecasting is an essential task in power system operations. Operators usually issue forecasts and use them to schedule energy dispatch in advance. However, forecasting models are typically developed in a way that overlooks the decision value of forecasts.
To bridge the gap, we design a \emph{value-oriented} point forecasting approach for sequential energy dispatch problems with renewable energy sources. At the training phase, we align the training objective with the decision value, i.e., minimizing the overall operating cost. The forecasting model parameter estimation is formulated as a bilevel program. 
Under mild assumptions, we convert the upper-level objective into an equivalent form using the dual solutions obtained from the lower-level operation problems.
In addition, a novel iterative solution strategy is proposed for the newly formulated bilevel program. Under such an iterative scheme, we show that the upper-level objective is locally linear with respect to the forecasting model output and can act as the loss function. Numerical experiments demonstrate that, compared to commonly used forecasts predicting expected realization, forecasts obtained by the proposed approach result in lower operating costs. 
Meanwhile, the proposed approach achieves performance comparable to that of two-stage stochastic programs, but is more computationally efficient. 

Keywords: Renewable energy forecasting, Decision-focused learning, Energy dispatch, Sequential operation

\end{abstract}




\section{Introduction}
With the ongoing transition towards net zero emissions in the energy sector, the penetration of renewable energy sources (RESs) has surged over recent decades. Due to inherent uncertainty, RESs cannot be scheduled at will, which has therefore motivated research on renewable energy forecasting to communicate information to the operation of power systems in advance \cite{morales2013integrating}. 
Typically, forecasts are communicated in the form of point forecasts~\cite{wang2018combining}, prediction intervals \cite{zhang2022optimal}, and probability distributions\cite{wen2022continuous,chen2018model}, which offer a snapshot or a whole picture of the predictive distribution of renewable generation. Today, the forecasting community is increasingly relying on data-driven approaches, which estimate parameters of forecasting models based on historical data and some loss functions that measure the quality of forecasts in these data; see the comprehensive review by \cite{gneiting2014probabilistic}.

Usually, forecasts serve as the input to the downstream operational problems and therefore influence decisions. The resulting decision value of the downstream operational problem can also be regarded as the value of forecasts, as suggested by \cite{murphy1993good}. Several previous works have shown that forecasts with higher statistical accuracy may not necessarily lead to desirable value in the operational problems \cite{carriere2019integrated,zhao2021cost,chen2021feature,stratigakos2022prescriptive,zhang2022contextual,zhang2022cost}. Therefore, \textbf{value-oriented forecasting} has been advocated \cite{bertsimas2020predictive,elmachtoub2022smart}, which aims at developing forecasting products that lead to higher value in subsequent decision-making, instead of better statistical accuracy.  

However, training forecasting models to achieve higher operating value presents significant challenges. Different lines of research in the literature address this issue in various ways; see an up-to-date review \cite{mandi2023decision}. The first strand derives the gradient of optimal decision solutions w.r.t. the forecast for training forecasting models via gradient descent \cite{donti2017task,wahdany2023more}, which involves a large-size matrix inverse and results in a computational burden. For operational problems where such a gradient is discontinuous or even vanishes everywhere, such as linear programs where coefficients in the objective function need to be estimated, this approach is not suitable. This motivates the second thread, which designs loss functions for training. ``Smart Predict, then Optimize" (SPO) loss \cite{elmachtoub2022smart} is one of the pioneering works in this area, which relies on a hand-crafted regret loss function. To automate the process of designing loss functions, several studies propose learning the loss function from the data \cite{shah2022decision,zhang2022cost}.

In general, value-oriented forecasting approaches can communicate information that aligns with the operational value. 
Most power system operational problems are organized sequentially \cite{kirschen2018fundamentals}, which includes two deterministic problems made at different times, 
namely in the day-ahead and real-time stages (both referring to the time at which decisions are made relative to the delivery of energy). At the day-ahead stage, a single RES forecast is used as input to the operational problem to settle the quantity of energy to be dispatched the next day. The real-time stage occurs close to the actual delivery of electricity, following the realization of RES. During this stage, corrective decisions are made by solving an operational problem to balance any deviations from the initial schedules. In this setting, the decision value is commonly defined as minimizing the overall operating cost or maximizing the overall welfare/profit at both stages.

One of the most successful applications of aligning the forecasting model with such decision value definition is to settle the offering strategy of wind power producers, where the objective is to maximize the expected overall operating profit (by anticipating the balancing cost at the real-time stage). 
Refs \cite{bitar2012bringing,pinson2013wind,morales2013integrating} demonstrated that the wind power offering aligns with the Newsvendor problem \cite{arrow1951optimal}, and the optimal offering strategy is to predict the quantile.
The link between the decision value, i.e., the overall profit, and the forecast can be explicitly formulated, and the pinball loss can be used as a surrogate loss function to develop forecasting models.

For complex operational problems involving several decision variables and constraints, a closed-form relationship between the decision value, such as total operating cost, and the forecasts may not be explicit. 
One possible solution is to optimize the forecasting model parameter concurrently with the decision variables. For that, a bilevel program is formulated for the parameter estimation \cite{morales2023prescribing,garcia2021application}. 
In this context, the lower-level optimization encompasses day-ahead operations with the forecast acting as a parameter, while the upper-level problem focuses on optimizing the model parameters along with real-time decisions. To ensure the tractability of such a program, forecasting models are typically simplified to linear regression models, which might restrict their effectiveness.

In this paper, we develop a value-oriented forecasting approach in the context of day-ahead and real-time energy dispatch problems for virtual power plant (VPP) operators.
At the training phase, a bilevel program is formulated for the estimation of the forecasting model parameter, where the upper level optimizes the forecasting model parameter toward the decision value, i.e., minimizing the expected sum of day-ahead and real-time operating costs. To explicitly show the relationship between the decision value and the forecast, we convert the upper-level objective to its dual form, assuming that the energy dispatch problems are linear programs. This conversion allows us to obtain the ``value-oriented'' loss function for training the forecast models. By explicitly deriving the loss function, sophisticated regression models, such as Neural Networks (NNs), can be employed as forecasting models. This approach surpasses the limitations of earlier studies \cite{garcia2021application,morales2023prescribing} that were confined to linear models. Also, the explicit loss function allows us to directly obtain the gradient of the decision value w.r.t. the forecast. These explicit gradients correspond to the optimal dual solutions for day-ahead and real-time energy dispatch, which can be acquired efficiently and avoids the computational burden of calculating the gradient of optimal solutions w.r.t. the forecast \cite{donti2017task}. Furthermore, an iterative learning strategy is proposed, which iteratively updates the forecasting model parameters by gradient descent with the loss function and updates the loss function based on the newly updated forecasting model. Ultimately, we can achieve value-oriented RES forecasts that are considered ``good'' in terms of the expected overall costs for sequential operations. Our main contributions are,

\textcolor{black}{1) From a practitioner perspective, we develop a computationally efficient framework for training advanced forecasting models (such as Neural Networks) that align with the decision value of a VPP operator. An iterative learning approach is proposed for training, which iteratively updates the forecasting model parameter via gradient descent and the loss function coefficients based on the newly updated forecasts.}

\textcolor{black}{2) From a VPP operator perspective, we explain why value-oriented forecasts are more desirable than quality-oriented forecasts that minimize the mean squared error of renewable forecast and realization. The derived value-oriented loss function indicates that the key reason is the {asymmetric} real-time marginal cost/utility in cases of energy shortage and surplus.}


The remainder is organized as follows: Section \uppercase\expandafter{\romannumeral2} presents the preliminaries regarding the day-ahead and real-time energy dispatch problems with RES. Section \uppercase\expandafter{\romannumeral3} presents the training phase and formulates the parameter estimation method for the value-oriented forecasting model. The iterative solution strategy is proposed in Section \uppercase\expandafter{\romannumeral4}. Sections \uppercase\expandafter{\romannumeral5} and \uppercase\expandafter{\romannumeral6} present case studies with results and discussion, followed by conclusions.

\section{Preliminaries: Day-ahead and Real-time Energy Dispatch}

In this work, we consider the operation of a centralized VPP operator that runs a group of assets including RES, slow-start generators, and some flexible resources (e.g., flexible generators and energy storage). The operator solves an energy dispatch problem, i.e., scheduling the assets to balance total demands at the lowest cost \cite{gomez2018electric}.

Usually, VPPs are operated in a sequential framework, which includes the day-ahead and real-time stages. Decisions in the day-ahead stage are solved at time $t$ on day $d-1$, which schedules the generation of slow-start generators for each time slot, i.e., $\tau,\forall \tau=1,...,T$ on the next day $d$. At the real-time stage, the operator dispatches the flexible resources to compensate for the deficit or surplus of the power imbalance, which is calculated throughout the day $d$, over time intervals of 1 h. In this paper, we assume the real-time dispatch decisions are made, after the RES realization is revealed \footnote{In fact, the realization is approximately informed close to the time $\tau$, as available information is adequate for an accurate estimate.}. \textcolor{black}{Also, the day-ahead schedule is assumed to be fixed at the real-time stage. This is reasonable for inflexible devices, such as combined heat and power (CHP) plants. CHP plants produce both electricity and useful heat simultaneously. The heat demand often dictates their operation, making their electricity generation schedule relatively inflexible \cite{fan2023three}.}

We provide a review of three existing two-stage dispatch models under the uncertainties of RES, whose illustration is shown in Fig. \ref{clarification}. 

1) Deterministic dispatch: The operator schedules slow-start generators at the day-ahead stage, considering a single RES forecast. The real-time dispatch addresses the power imbalance caused by forecasting errors, via flexible resources. This is a simplification of how most centralized operators work.

2) Stochastic dispatch: The operator solves a two-stage stochastic program at the day-ahead stage, which schedules slow-start generators and RES, and anticipates its impact on the real-time operation. In this way, the first-stage decisions of stochastic dispatch involve the decisions regarding slow-start generators and RES schedules. The second-stage decisions, also known as recourse decisions, involve the dispatch of flexible resources under RES scenarios (which is a kind of probabilistic forecast to approximate the conditional distribution). Only the first-stage solutions will be kept at the day-ahead stage. Real-time dispatch is still needed to address the power imbalance caused by the difference between the RES realization and its schedule.

3) Improved deterministic dispatch: As conventional deterministic dispatch is widely known to lead to high overall operating costs, \cite{morales2014electricity} proposed to leverage a strategic RES schedule within the deterministic operation framework, to approach the performance of stochastic dispatch. This schedule is determined by a stochastic bilevel program.

\begin{figure}[h]
  \centering
  \includegraphics[scale=0.38]{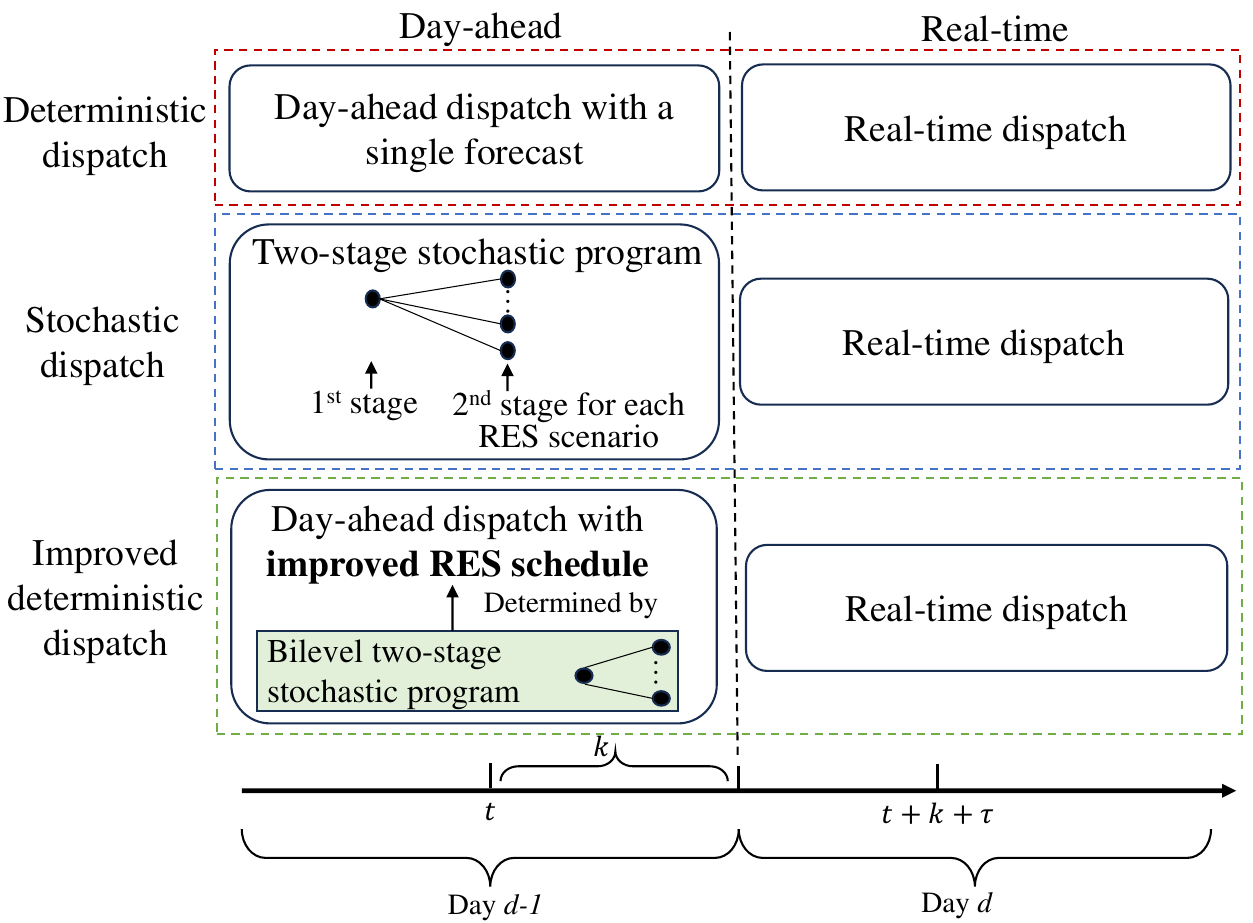}\\
  \caption{The illustration of deterministic, stochastic, and improved deterministic dispatch, where $k$ is the time interval between $t$ and 1 am on day $d$.}
 \label{clarification}
 
\end{figure}

\subsection{Deterministic Dispatch}\label{Conventional Operation} 

\textit{Day-ahead Dispatch with a Point Forecast Predicting Expected Renewable Realization:} Let $y_{d,\tau}$ denote the RES generation at time $\tau$ on day $d$. As it is not known ex-ante, we often use a forecast $\hat{y}_{d,\tau}$ for decision planning. We use $\bm{x}_{d,\tau}$ to denote the decision variables regarding the power generation of slow-start generators at the day-ahead stage. Since the RES has zero marginal cost and we assume that its capacity is less than the minimum load, the operator uses all its forecast $\hat{y}_{d,\tau}$ to balance the load. The problem at the day-ahead stage, minimizing the linear operating cost of slow-start generators in a day, is given by  
\begin{subequations}\label{2}
\begin{alignat}{2} 
&\mathop{\min}_{\{\bm{x}_{d,\tau}\}_{\tau=1}^T}
  &&\sum_{\tau=1}^{T}\bm{\rho}^\top\bm{x}_{d,\tau}\\ 
& \text{s.t.}
&&   f_i^D(\bm{x}_{d,\tau})\leq 0,\forall i \in I^D,\forall \tau=1,...,T \label{2(b)}
    \\ 
&&& 
-\bm{r} \leq \bm{x}_{d,\tau+1}-\bm{x}_{d,\tau} \leq \bm{r}, 
\forall \tau=1,...,T-1\label{2(c)}\\
    &&&  \bm{1}^\top\bm{x}_{d,\tau}+\hat{y}_{d,\tau}=l_{d,\tau},\forall \tau=1,...,T \label{2(d)}.
\end{alignat}
\end{subequations}
\textcolor{black}{where $\bm{\rho}$ is the marginal generation cost.}
We express the optimal solution of the decision $\bm{x}_{d,\tau}$ as $\bm{x}^*_{d,\tau}(\hat{y}_{d,\tau})$, since it depends on the forecast $\hat{y}_{d,\tau}$. \eqref{2(b)} is the constraint regarding the operational limits, such as limiting the output power of the generators, and $I^D$ gives the set of those constraints.
\eqref{2(c)} limits the up- and down-ramping of the generators between adjacent time steps within the limit $\bm{r}$.
The power balance under the load $l_{d,\tau}$ is ensured by \eqref{2(d)}. We assume the load forecast is rather accurate, and the uncertainty is only from RES. 
Additionally, we consider a single-node system.
The total power generation equals the total demand.

\textit{Real-time Dispatch:} As the errors between $\hat{y}_{d,\tau}$ and $y_{d,\tau}$ are unavoidable, it requires the operators to dispatch other assets to compensate the errors as time close to delivery. After the RES realization  $y_{d,\tau}$ is revealed at time $\tau$ on day $d$, the energy imbalance caused by RES is addressed independently at each time $\tau$ in real-time. The energy imbalance is then given by $y_{d,\tau}-\hat{y}_{d,\tau}$,  which represents a surplus of generation, if positive, or a shortage, if negative. We group the decisions taken to cope with the energy shortage into the vector $\bm{z}_{d,\tau}^+,\forall \tau=1,...,T$ and the decisions taken to cope with the energy surplus into the vector $\bm{z}_{d,\tau}^-,\forall \tau=1,...,T$. 
Minimizing the cost of the flexible resources outputs, the problem solved at time $\tau$ for compensating the deviation $y_{d,\tau}-\hat{y}_{d,\tau}$ is,
\begin{subequations}\label{3}
\begin{alignat}{2}  
&\mathop{\min}_{\bm{z}_{d,\tau}^+,\bm{z}_{d,\tau}^-}
&&  \quad \bm{\rho}^{+\top}\bm{z}_{d,\tau}^+-\bm{\rho}^{-\top}\bm{z}_{d,\tau}^-\label{3(a)}\\
    &\text{s.t.} 
    && \quad  f_i^{R+}(\bm{z}_{d,\tau}^+)\leq 0,\forall i \in I^{R+} \label{3(b)}
    \\ 
    &&&  \quad f_i^{R-}(\bm{z}_{d,\tau}^-)\leq 0,\forall i \in I^{R-} \label{3(c)}
    \\
    &&&  \quad \bm{1}^\top(\bm{z}_{d,\tau}^+-\bm{z}_{d,\tau}^-)+y_{d,\tau}-\hat{y}_{d,\tau}=0 \label{3(d)}.
\end{alignat}
\end{subequations}
\textcolor{black}{where $\bm{\rho}^+$ and $\bm{\rho}^-$ are marginal cost and utility.}
We express the optimal solution of the decision $\bm{z}_{d,\tau}^+,\bm{z}_{d,\tau}^-$ as $\bm{z}^{+*}_{d,\tau}(\hat{y}_{d,\tau},y_{d,\tau}),\bm{z}^{-*}_{d,\tau}(\hat{y}_{d,\tau},y_{d,\tau})$, since it depends on the forecast $\hat{y}_{d,\tau}$ and the realization $y_{d,\tau}$. \eqref{3(b)} and \eqref{3(c)} comprise upper and lower bounds
on the flexible resources output, and $I^{R+},I^{R-}$ give the set of those constraints. \eqref{3(d)} ensures that the deviation between RES forecast and realization is addressed, and the system remains in balance. 

\textit{Forecasting Products Requirement:} At the day-ahead stage, operators are required to predict a single value for the RES generation the next day, and then schedule the assets. The forecasts are issued at time $t$ in day-ahead, by a model $g$ with the estimated parameters $\hat{\Theta}_Q$ (the subscript $Q$ stands for quality-oriented forecasts), based on the available information $\bm{s}_d$. Here, $\bm{s}_d$  is a set formed by the realization $\boldsymbol{s}_{d,\tau}$ of the random variable $S_{d,\tau}$ (denoting the contextual information at time slot $\tau$), i.e., $\bm{s}_d = \{\boldsymbol{s}_{d,1},...,\boldsymbol{s}_{d,T}\}$. Take day-ahead wind power forecasting for instance, such information mainly contains numerical weather predictions (NWPs). Denote the forecast for RES at time slot $\tau$ as $\hat{y}_{d,\tau}$, which is obtained by
\begin{equation} \label{1}
    \hat{y}_{d,\tau}=g(\boldsymbol{s}_{d,\tau};\hat{\Theta}_Q).
\end{equation}
One of the most widely accepted practices is issuing $\hat{y}_{d,\tau}$ to predict the ``most likely realization'' of RES random variable $Y_{d,\tau}$. The expected value of $Y_{d,\tau}$ is a natural candidate to take this role. In this way, the parameter $\Theta_Q$ is usually estimated via data-driven methods and particularly by minimizing the mean squared error (MSE) at the training phase.

\subsection{Two-stage Stochastic Program based Dispatch}\label{Two-stage Stochastic Program based Dispatch}

\textit{Day-ahead Dispatch with Probabilistic Forecasts and Stochastic Optimization:} As an input parameter, the forecast $\hat{y}_{d,\tau}$ influences the day-ahead and real-time problems. Concretely, $\hat{y}_{d,\tau}$ affects the generation schedule of slow-start generators, and in turn, affects the day-ahead operating cost. Its over- and down-prediction results in different operating decisions in real-time. In this way, the forecast $\hat{y}_{d,\tau}$ couples the day-ahead and real-time problems. However, the separate operation in \eqref{2} and \eqref{3} fails to take such coupling into account, thereby leading to higher overall operating costs. Therefore, a stochastic program for co-optimizing the two stages is advocated, which considers the temporal dependency of the day-ahead and real-time operation and treats the RES schedule as a decision variable. Such a two-stage stochastic program is solved in day-ahead for all time slots $\tau,\forall \tau=1,...,T$ on the next day $d$. It determines the RES schedule $\Tilde{y}_{d,\tau}$ and generation schedule $\bm{x}_{d,\tau}$, anticipating their influence on the expected real-time cost given any potential realization of the RES. For that, $\Tilde{y}_{d,\tau}$ and $\bm{x}_{d,\tau}$ form the first-stage decisions, and the schedule of $\bm{x}_{d,\tau}$ incurs the first-stage cost $\bm{\rho}^\top\bm{x}_{d,\tau}$. The flexible resources output $\Tilde{\bm{z}}_{d,\tau}^+,\Tilde{\bm{z}}_{d,\tau}^-$ is the second-stage decision, which depends on each realization of RES forecasting error $Y_{d,\tau}-\Tilde{y}_{d,\tau}$. Its dispatch incurs the second-stage cost $\bm{\rho}^{+\top}\Tilde{\bm{z}}_{d,\tau}^+-\bm{\rho}^{-\top}\Tilde{\bm{z}}_{d,\tau}^-$. The stochastic program for minimizing the summation of the first-stage cost and the expected second-stage cost is formulated as,
\begin{subequations}\label{4}
\begin{alignat}{2} 
&\underset{\Xi}{\mathop{\min}}&& \quad \sum_{\tau=1}^{T}\{\bm{\rho}^\top\bm{x}_{d,\tau}+\mathbb{E}_{ Y_{d,\tau}}\left [\bm{\rho}^{+\top}\Tilde{\bm{z}}_{d,\tau}^+-\bm{\rho}^{-\top}\Tilde{\bm{z}}_{d,\tau}^- \right ]\}\label{4(a)}\\
    & \text{s.t.} 
    &&  \quad f_i^D(\bm{x}_{d,\tau})\leq 0,\forall i \in I^D,\forall \tau=1,...,T \label{4(b)}
    \\ 
    &&& \quad -\bm{r} \leq \bm{x}_{d,\tau+1}-\bm{x}_{d,\tau} \leq \bm{r},\forall \tau=1,...,T-1\label{4(c)}\\
    &&& \quad \bm{1}^\top\bm{x}_{d,\tau}+\Tilde{y}_{d,\tau}=l_{d,\tau},\forall \tau=1,...,T \label{4(d)}\\
    &&& \quad 0 \leq \Tilde{y}_{d,\tau} \leq \bar{y},\forall \tau=1,...,T\\
    &&& \quad f_i^{R+}(\Tilde{\bm{z}}_{d,\tau}^+)\leq 0,\forall i \in I^{R+},\forall \tau=1,...,T \label{4(f)}
    \\ 
    &&& \quad f_i^{R-}(\Tilde{\bm{z}}_{d,\tau}^-)\leq 0,\forall i \in I^{R-},\forall \tau=1,...,T \label{4(g)}
    \\
    &&& \quad \bm{1}^\top(\Tilde{\bm{z}}_{d,\tau}^+-\Tilde{\bm{z}}_{d,\tau}^-)+Y_{d,\tau}-\Tilde{y}_{d,\tau}=0,\forall \tau=1,...,T \label{4(h)},
\end{alignat}
\end{subequations}
where $\Xi=\{\bm{x}_{d,\tau},\Tilde{\bm{z}}_{d,\tau}^+,\Tilde{\bm{z}}_{d,\tau}^-,\Tilde{y}_{d,\tau}\}_{\tau=1}^T$. $\bar{y}$ is the capacity of RES and serves as the upper bound for the RES schedule. $Y_{d,\tau}$ follows the conditional distribution $F_{Y_{d,\tau}|S_{d,\tau}=\bm{s}_{d,\tau}}$. One way to estimate the distribution is to predict the independent and identically distributed (i.i.d.) scenarios, with the same probability of occurrence. However, it usually requires a large number of i.i.d. scenarios to accurately model the distribution, which poses a huge computational burden to solving \eqref{4}.

Notice that by anticipating the balancing operation with the constraints \eqref{4(f)},\eqref{4(g)} and \eqref{4(h)} and considering the expected balance cost in the objective \eqref{4(a)}, the stochastic program \eqref{4} empowers the first-stage decisions to account for the impact of the energy imbalance on the second-stage. After solving \eqref{4}, the solution $\boldsymbol{x}_{d,\tau}^*$ will be used for scheduling slow-start generators in day-ahead. 

\textit{Real-time Dispatch:} Similar to the deterministic approach, after the RES realization $y_{d,\tau}$ is revealed, the real-time problem in \eqref{3} is solved for settling the deviation between the realization and the optimal RES schedule, i.e., $y_{d,\tau}-\Tilde{y}_{d,\tau}^*$.

\textit{Forecasting Products Requirement:} Operators are required to predict the conditional distribution $F_{Y_{d,\tau}|S_{d,\tau}=\bm{s}_{d,\tau}}$.

\subsection{Improved Deterministic Dispatch}\label{Improved Schedule of RES}

\textit{Day-ahead Dispatch with Improved RES Schedule:} The two-stage stochastic program \eqref{4} has the objective of minimizing the overall first-stage and second-stage costs. 
In an effort to enhance the performance of separate operation models \eqref{2} and \eqref{3}, we optimize the RES schedule with the same objective as in \eqref{4}, thereby ensuring that this value accounts for the
the second-stage cost resulting from RES uncertainty. Meanwhile, we need to ensure that such a schedule aligns with the deterministic dispatch framework. For that, a bilevel program is formulated \cite{morales2014electricity}, where the upper-level determines the schedule $\Tilde{y}_{d,\tau}$, and the lower-level solves the first-stage and the second-stage problems separately which are equivalent to the conventional dispatch in \eqref{2} and \eqref{3},
\begin{subequations}\label{5}
\begin{alignat}{2}
& \underset{\{\Tilde{y}_{d,\tau}\}_{\tau=1}^T}{\mathop{\min}}&& \sum_{\tau=1}^{T}\{\bm{\rho}^\top\bm{x}_{d,\tau}^*(\Tilde{y}_{d,\tau})+\nonumber\\
&&&\mathbb{E}_{ Y_{d,\tau}}\left [\bm{\rho}^{+\top}\Tilde{\bm{z}}_{d,\tau}^{+*}(\Tilde{y}_{d,\tau},Y_{d,\tau})-\bm{\rho}^{-\top}\Tilde{\bm{z}}_{d,\tau}^{-*}(\Tilde{y}_{d,\tau},Y_{d,\tau}) \right ]\}\label{5a}\\
    & \text{s.t.} 
    &&   0 \leq \Tilde{y}_{d,\tau} \leq \bar{y},\forall \tau=1,...,T\label{5b}\\
    &&&  \{\bm{x}_{d,\tau}^*(\Tilde{y}_{d,\tau})\}_{\tau=1}^T=\mathop{\arg\min}_{\{\bm{x}_{d,\tau}\}_{\tau=1}^T}  \sum_{\tau=1}^{T}\bm{\rho}^\top\bm{x}_{d,\tau}\label{5c}\\
    &&&\text{s.t.} \qquad f_i^D(\bm{x}_{d,\tau})\leq 0,\forall i \in I^D,\forall \tau=1,...,T\label{5d}\\
      &&& \qquad \quad      -\bm{r} \leq \bm{x}_{d,\tau+1}-\bm{x}_{d,\tau} \leq \bm{r},\forall \tau=1,...,T-1\label{5e}\\
      &&&  \qquad \quad \bm{1}^\top\bm{x}_{d,\tau}+\Tilde{y}_{d,\tau}=l_{d,\tau},\forall \tau=1,...,T\label{5f}\\
      &&&  \{\Tilde{\bm{z}}_{d,\tau}^{+*}(\Tilde{y}_{d,\tau},Y_{d,\tau}),\Tilde{\bm{z}}_{d,\tau}^{-*}(\Tilde{y}_{d,\tau},Y_{d,\tau})= \nonumber \\
      &&&\mathop{\arg\min}_{\Tilde{\bm{z}}^+_{d,\tau},\Tilde{\bm{z}}^-_{d,\tau}} \bm{\rho}^{+\top}\Tilde{\bm{z}}_{d,\tau}^+-\bm{\rho}^{-\top}\Tilde{\bm{z}}_{d,\tau}^-\label{5g}\\
    &&& \text{s.t.} \qquad f_i^{R+}(\Tilde{\bm{z}}_{d,\tau}^+)\leq 0,\forall i \in I^{R+}\label{5h}\\
    &&& \qquad \quad f_i^{R-}(\Tilde{\bm{z}}_{d,\tau}^-)\leq 0,\forall i \in I^{R-}\label{5i}\\
      &&& \qquad \quad     \bm{1}^\top(\Tilde{\bm{z}}_{d,\tau}^+-\Tilde{\bm{z}}_{d,\tau}^-)+Y_{d,\tau}-\Tilde{y}_{d,\tau}=0\}\nonumber,\\
      &&& \qquad \quad\forall \tau=1,...,T,\label{5j}
\end{alignat}
\end{subequations}
where the random variable $Y_{d,\tau}$ follows the conditional distribution $F_{Y_{d,\tau}|S_{d,\tau}=\bm{s}_{d,\tau}}$. The upper-level decision $\Tilde{y}_{d,\tau}$ serves as the parameter in the lower-level problems, and affects the lower-level decisions $\bm{x}_{d,\tau}^*(\Tilde{y}_{d,\tau})$ and $\bm{z}_{d,\tau}^{+*}(\Tilde{y}_{d,\tau},Y_{d,\tau}),\bm{z}_{d,\tau}^{-*}(\Tilde{y}_{d,\tau},Y_{d,\tau})$, which form the upper-level objective in \eqref{5}. After solving \eqref{5}, the improved RES schedule $\Tilde{y}_{d,\tau}^*,\forall \tau=1,...,T$ will enter the day-ahead problem in \eqref{2} to take the role of RES forecast $\hat{y}_{d,\tau}$.

\textit{Real-time Dispatch:} Real-time dispatch is the same as that in Section \ref{Two-stage Stochastic Program based Dispatch}.

\textit{Forecasting Products Requirement:} To solve the stochastic program in \eqref{5}, scenarios for approximating the conditional distribution is needed. Therefore, forecasting products is the same as that in Section \ref{Two-stage Stochastic Program based Dispatch}.

Although the program \eqref{5} can determine the RES schedule by accounting for the future balancing cost caused by RES uncertainty, it requires resolving the stochastic bilevel program every time for determining the RES schedule on a new day, which is not computationally efficient. To save the computational burden, we will learn a forecasting model in the next section, which maps the realization of contextual information to the improved RES schedule.

\section{Methodology}

In this section, we propose to train a forecasting model $g$ to issue the strategic RES schedule, given the contextual information $\bm{s}_{d,\tau}$.
To this end, we align the objective of the parameter estimation with the same objective of \eqref{5}, i.e., minimizing the expected operating costs of the two stages, and term such an approach as value-oriented forecasting. Similar to the commonly used forecasts defined in \eqref{1}, we still model forecasts with a function $g$ with parameters $\Theta_V$ (to make a difference with the model \eqref{1}, the subscript $V$ stands for the proposed value-oriented forecasts). However, the goal here is not to forecast the expectation of the random variable $Y_{d,\tau}$, but a \emph{strategic} quantity to be used at the day-ahead stage. Given the contextual information $\boldsymbol{s}_{d,\tau}$, the value-oriented RES forecast is derived as
\begin{equation}\label{6}
   \Tilde{y}_{d,\tau}=g(\bm{s}_{d,\tau};
   \Theta_V). 
\end{equation}

Usually, the parameters $\Theta_V$ are estimated via a training set formed by historical contextual information and RES realization in $D$ days, i.e., $\{\{\bm{s}_{d,\tau},y_{d,\tau}\}_{\tau=1}^T\}_{d=1}^D$. The data in the training set are assumed to be independently and identically drawn from a joint distribution $F_{Y_{d,\tau},S_{d,\tau}}$ over $Y_{d,\tau}$ and $S_{d,\tau}$. In this way, for any fixed contextual information $S_{d,\tau}=\bm{s}_{d,\tau}$, the conditional distribution $F_{Y_{d,\tau}|S_{d,\tau}=\bm{s}_{d,\tau}}$ is approximated by a single support $y_{d,\tau}$ \cite{bishop2006pattern}. The ultimate estimate for the parameters $\Theta_V$ is denoted as $\hat{\Theta}_V$.
Particularly, at the training phase and given the context $\bm{s}_{d,\tau}$, we denote the forecasts given by $g$ as $\Tilde{y}_{d,\tau}$, with $\Theta_V$ given by any value.
We have the following assumption:

\begin{assumption}
    The day-ahead operation in \eqref{2} and the real-time operation in \eqref{3} have unique primal and dual solutions.
\end{assumption}

The above assumption is reasonable. The primal solutions are interpreted as the scheduled energy, while the dual solutions represent the corresponding prices. Typically, these solutions are unique in the operation. In what follows, we formulate the parameter estimation problem at the training phase in Section \ref{Training Phase}, followed by the operation phase in Section \ref{Operational Forecasting Phase}.

\begin{figure}[th]
  \centering
  \includegraphics[scale=0.55]{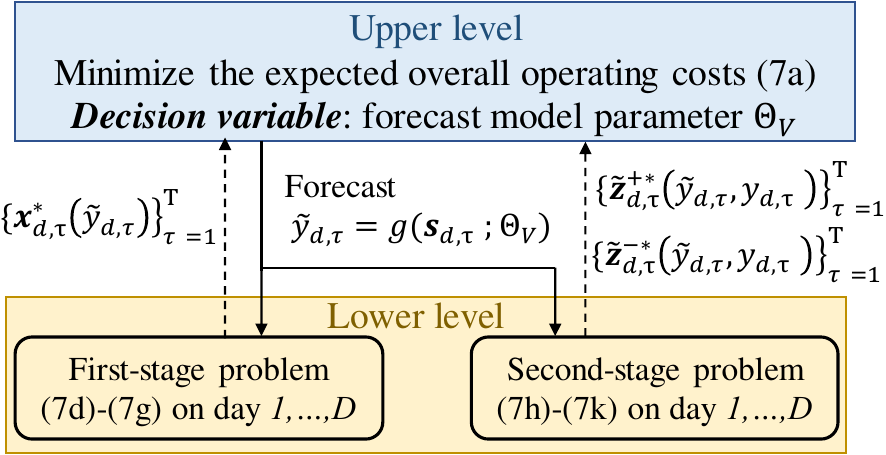}\\
  \caption{Illustration of the proposed bilevel program \eqref{7} for model training of value-oriented forecasting. \vspace{-0.5cm}}
  \label{bilevel}
\end{figure}
\subsection{Training Phase} \label{Training Phase}
For parameter estimation at the training phase, we propose formulating a bilevel program, adopting a similar bilevel form as \eqref{5}. The upper-level problem determines the parameter $\Theta_V$, for minimizing the average of the overall operating cost over $D$ days. Specifically, since the conditional distribution  $F_{Y_{d,\tau}|S_{d,\tau}=\bm{s}_{d,\tau}}$ is supported only by $y_{d,\tau}$, the expected second-stage cost $\mathbb{E}_{Y_{d,\tau}}[\bm{\rho}^{+\top}\Tilde{\bm{z}}_{d,\tau}^{+*}(\Tilde{y}_{d,\tau},Y_{d,\tau})-\bm{\rho}^{-\top}\Tilde{\bm{z}}_{d,\tau}^{-*}(\Tilde{y}_{d,\tau},Y_{d,\tau})]$ is approximated by the cost under this support, i.e., $\bm{\rho}^{+\top}\Tilde{\bm{z}}_{d,\tau}^{+*}(\Tilde{y}_{d,\tau},y_{d,\tau})-\bm{\rho}^{-\top}\Tilde{\bm{z}}_{d,\tau}^{-*}(\Tilde{y}_{d,\tau},y_{d,\tau})$. The lower-level repeats first- and second-stage problems for each day, which amounts to the day-ahead and real-time problems in \eqref{2} and \eqref{3}. The optimal solutions parameterized by the forecast $\Tilde{y}_{d,\tau}$ are obtained from the lower level. The bilevel program is,
\begin{subequations}\label{7}
\begin{alignat}{2} 
&\underset{{\Theta_V}}{\mathop{\min}}&& \quad \frac{1}{D\cdot T}\sum_{d=1}^{D}\sum_{\tau=1}^{T}\{\bm{\rho}^\top\bm{x}_{d,\tau}^*(\Tilde{y}_{d,\tau})+\nonumber\\
&&& \bm{\rho}^{+\top}\Tilde{\bm{z}}_{d,\tau}^{+*}(\Tilde{y}_{d,\tau},y_{d,\tau})-\bm{\rho}^{-\top}\Tilde{\bm{z}}_{d,\tau}^{-*}(\Tilde{y}_{d,\tau},y_{d,\tau})\}\label{7a}\\
    & \text{s.t.} 
    && \Tilde{y}_{d,\tau}=g(\bm{s}_{d,\tau};\Theta_V),\forall \tau=1,...,T,\forall d=1,...,D\\ 
    &&&0 \leq \Tilde{y}_{d,\tau} \leq \bar{y},\forall \tau=1,...,T,\forall d=1,...,D\\
    &&&\{\{\bm{x}_{d,\tau}^*(\Tilde{y}_{d,\tau})\}_{\tau=1}^T=\mathop{\arg\min}_{\{\bm{x}_{d,\tau}\}_{\tau=1}^T}  \sum_{\tau=1}^{T}\bm{\rho}^\top\bm{x}_{d,\tau}\label{7d}\\
    &&&\text{s.t.} \qquad f_i^D(\bm{x}_{d,\tau})\leq 0:\delta_{i,d,\tau},\forall i \in I^D,\forall \tau=1,...,T\label{7e}\\
      &&& \qquad \quad      -\bm{r} \leq \bm{x}_{d,\tau+1}-\bm{x}_{d,\tau} \leq \bm{r}:\underline{\bm{\eta}_{d,\tau}},\overline{\bm{\eta}_{d,\tau}},\nonumber\\
      &&& \qquad \qquad \qquad \qquad \qquad \qquad   \forall \tau=1,...,T-1\label{7f}\\
      &&&  \qquad \quad \bm{1}^\top\bm{x}_{d,\tau}+\Tilde{y}_{d,\tau}=l_{d,\tau}:\lambda_{d,\tau},\forall \tau=1,...,T\},\nonumber\\
      &&& \qquad \qquad \qquad \qquad \qquad \qquad \qquad \quad  \forall d=1,...,D\label{7g}\\
    &&&  \{\Tilde{\bm{z}}_{d,\tau}^{+*}(\Tilde{y}_{d,\tau},y_{d,\tau}),\Tilde{\bm{z}}_{d,\tau}^{-*}(\Tilde{y}_{d,\tau},y_{d,\tau})= \nonumber\\
    &&&\mathop{\arg\min}_{\Tilde{\bm{z}}_{d,\tau}^+,\Tilde{\bm{z}}_{d,\tau}^-} \bm{\rho}^{+\top}\Tilde{\bm{z}}_{d,\tau}^+-\bm{\rho}^{-\top}\Tilde{\bm{z}}_{d,\tau}^-\label{7h}\\
    &&& \text{s.t.} \qquad f_i^{R+}(\Tilde{\bm{z}}_{d,\tau})\leq 0:\mu_{i,d,\tau},\forall i \in I^{R+}\label{7i}\\
    &&& \qquad \quad f_i^{R-}(\Tilde{\bm{z}}_{d,\tau})\leq 0:\zeta_{i,d,\tau},\forall i \in I^{R-}\label{7j}\\
      &&& \qquad \quad     \bm{1}^\top(\Tilde{\bm{z}}_{d,\tau}^+-\Tilde{\bm{z}}_{d,\tau}^-)+y_{d,\tau}-\Tilde{y}_{d,\tau}=0:\nu_{d,\tau}\},\nonumber\\
      &&& \qquad \qquad \qquad \qquad \quad \forall \tau=1,...,T,\forall d=1,...,D,\label{7k}
\end{alignat}
\end{subequations}
where the term inside the curly bracket in \eqref{7a} is the expected day-ahead and real-time overall operating cost at the time slot $\tau$ on day $d$. The model parameter $\Theta_V$, which is the upper-level variable, has an influence on the forecast $\Tilde{y}_{d,\tau}$, and also the parameterized solutions $\bm{x}_{d,\tau}^*(\Tilde{y}_{d,\tau}),\Tilde{\bm{z}}_{d,\tau}^{+*}(\Tilde{y}_{d,\tau},y_{d,\tau}),\Tilde{\bm{z}}_{d,\tau}^{-*}(\Tilde{y}_{d,\tau},y_{d,\tau})$ as well as the average operating cost over $D$ days through \eqref{7a}.  
An illustration of such a bilevel program is shown in Fig. \ref{bilevel}.

\begin{remark} 
 As one of the inspiring works in the literature regarding the application of bilevel program for parameter estimation, \cite{morales2023prescribing} used a simplified first-stage dispatch model. We go beyond \cite{morales2023prescribing}. Our model \eqref{7} can address the temporal correlated constraints, such as ramping constraints in \eqref{7f},  which aligns with the current operation practice.
\end{remark}


The optimal dual solutions of \eqref{7d}-\eqref{7g} parameterized by $\Tilde{y}_{d,\tau}$ are denoted as $\delta_{i,d,\tau}^*(\Tilde{y}_{d,\tau})$, $\underline{\bm{\eta}_{d,\tau}}^*(\Tilde{y}_{d,\tau})$, $\overline{\bm{\eta}_{d,\tau}}^*(\Tilde{y}_{d,\tau})$, and $\lambda_{d,\tau}^*(\Tilde{y}_{d,\tau})$. The optimal dual solutions of \eqref{7h}-\eqref{7k} parameterized by $\Tilde{y}_{d,\tau},y_{d,\tau}$ are denoted as $\mu_{i,d,\tau}^*(\Tilde{y}_{d,\tau},y_{d,\tau})$, $\zeta_{i,d,\tau}^*(\Tilde{y}_{d,\tau},y_{d,\tau})$ and $\nu_{d,\tau}^*(\Tilde{y}_{d,\tau},y_{d,\tau})$.
Usually, the operational constraints are linear, and the problems with linear operating cost in \eqref{7d}-\eqref{7g} and \eqref{7h}-\eqref{7k} are linear programs. In this way, we resort to the dual problems for explicitly rewriting  \eqref{7a} as a function of the forecasting model output $\Tilde{y}_{d,\tau}$. We have the following proposition for it.

\begin{prop}
 Consider the convex optimization problems \eqref{7d}-\eqref{7g} and \eqref{7h}-\eqref{7k} in the form of linear programs (LPs), with the parameterized dual solutions associated with the constraints \eqref{7g} and \eqref{7k} are $\lambda_{d,\tau}^*(\Tilde{y}_{d,\tau})$ and $\nu_{d,\tau}^*(\Tilde{y}_{d,\tau},y_{d,\tau})$, respectively. The optimal primal objectives of \eqref{7d} and \eqref{7h} at time $\tau$ on day $d$ are $\bm{\rho}^\top\bm{x}_{d,\tau}^*(\Tilde{y}_{d,\tau})$ and $\bm{\rho}^{+\top}\Tilde{\bm{z}}_{d,\tau}^{+*}(\Tilde{y}_{d,\tau},y_{d,\tau})-\bm{\rho}^{-\top}\Tilde{\bm{z}}_{d,\tau}^{-*}(\Tilde{y}_{d,\tau},y_{d,\tau})$.
As the strong duality holds, the optimal primal objectives equal the respective optimal dual objectives, expressed by the optimal dual solutions. The term in the curly bracket of \eqref{7a} is equivalent as,
\begin{equation}\label{12}
-\lambda_{d,\tau}^*(\Tilde{y}_{d,\tau}) \Tilde{y}_{d,\tau}-\nu_{d,\tau}^* (\Tilde{y}_{d,\tau},y_{d,\tau})(y_{d,\tau}-\Tilde{y}_{d,\tau})+ \psi^{D,*}_{d,\tau}+\psi^{R,*}_{d,\tau},
\end{equation}
where the notations $\psi^{D,*}_{d,\tau}$ and $\psi^{R,*}_{d,\tau}$ denote the remaining terms of the optimal dual objectives, which no longer involve $\Tilde{y}_{d,\tau}$.
\end{prop}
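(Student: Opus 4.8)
The plan is to dualize each lower-level linear program, locate the single place where the forecast $\tilde{y}_{d,\tau}$ enters the problem data, and then use strong duality to replace the primal optimal objectives appearing in \eqref{7a} by their dual counterparts.

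First I would handle the day-ahead lower-level problem \eqref{7d}--\eqref{7g} for a fixed day $d$. By hypothesis all $f_i^D$ are affine, so this is an LP in $\{\bm{x}_{d,\tau}\}_{\tau=1}^T$; Assumption 1 supplies a (unique) primal optimizer and hence a finite optimal value, so LP strong duality holds and $\sum_{\tau}\bm{\rho}^\top\bm{x}_{d,\tau}^*(\tilde{y}_{d,\tau})$ equals the optimal dual value. Writing the dual objective in the standard ``right-hand side $\times$ multiplier'' form with the multipliers $\delta_{i,d,\tau},\underline{\bm{\eta}_{d,\tau}},\overline{\bm{\eta}_{d,\tau}},\lambda_{d,\tau}$ declared in \eqref{7e}--\eqref{7g}, the key observation is that $\tilde{y}_{d,\tau}$ appears in this LP only in the right-hand side of the balance constraint \eqref{7g}, and affinely there. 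Hence the forecast enters the dual objective only through the product $-\lambda_{d,\tau}^*(\tilde{y}_{d,\tau})\,\tilde{y}_{d,\tau}$; evaluating the multipliers at the optimum and collecting everything else --- the contributions of the operational limits, the ramping bounds $\pm\bm{r}$, and the constant term $l_{d,\tau}$ --- into $\psi^{D,*}_{d,\tau}$ gives the day-ahead part of \eqref{12}.

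Next I would repeat the argument for \eqref{7h}--\eqref{7k}, which for each pair $(d,\tau)$ is a separate LP in $(\tilde{\bm{z}}_{d,\tau}^+,\tilde{\bm{z}}_{d,\tau}^-)$; Assumption 1 again yields strong duality, so $\bm{\rho}^{+\top}\tilde{\bm{z}}_{d,\tau}^{+*}-\bm{\rho}^{-\top}\tilde{\bm{z}}_{d,\tau}^{-*}$ equals its optimal dual value. Here $\tilde{y}_{d,\tau}$ occurs only in the right-hand side $\tilde{y}_{d,\tau}-y_{d,\tau}$ of \eqref{7k}, so the dual objective contains it only via $\nu_{d,\tau}^*(\tilde{y}_{d,\tau},y_{d,\tau})(\tilde{y}_{d,\tau}-y_{d,\tau})=-\nu_{d,\tau}^*(\tilde{y}_{d,\tau},y_{d,\tau})(y_{d,\tau}-\tilde{y}_{d,\tau})$, with the contributions of $f_i^{R+},f_i^{R-}$ absorbed into $\psi^{R,*}_{d,\tau}$. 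Adding the two dual objectives reproduces \eqref{12} for the bracketed term of \eqref{7a}.

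The work here is bookkeeping rather than a genuine obstacle, but two points deserve care. First, one must put each LP in a standard form so that the dual objective is literally a sum of right-hand sides times multipliers, and then match every sign to the inequality directions and multiplier conventions fixed in \eqref{7e}--\eqref{7k}. Second, the claim that $\psi^{D,*}_{d,\tau}$ and $\psi^{R,*}_{d,\tau}$ ``no longer involve $\tilde{y}_{d,\tau}$'' should be read as: $\tilde{y}_{d,\tau}$ does not appear \emph{explicitly} as data in those terms; the optimal multipliers hidden inside $\psi^{D,*}_{d,\tau},\psi^{R,*}_{d,\tau}$ (and inside $\lambda^*_{d,\tau},\nu^*_{d,\tau}$) still depend on $\tilde{y}_{d,\tau}$ through the optimal basis, which is precisely why the local-linearity / iterative scheme of Section \uppercase\expandafter{\romannumeral4} is subsequently needed. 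A final minor remark: since the ramping constraints \eqref{7f} couple the time slots within a day, it is cleanest to carry out the duality on the entire day-$d$ day-ahead LP and then identify the coefficient multiplying each $\tilde{y}_{d,\tau}$, rather than treating $\bm{x}_{d,\tau}^*$ as a function of $\tilde{y}_{d,\tau}$ alone.
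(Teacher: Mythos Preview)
Your proposal is correct and follows essentially the same route as the paper: both arguments dualize each lower-level LP, observe that $\tilde{y}_{d,\tau}$ enters only through the right-hand side of the balance constraints \eqref{7g} and \eqref{7k}, and then invoke strong duality to read off the coefficients $-\lambda_{d,\tau}^*$ and $-\nu_{d,\tau}^*$ while packaging the remaining right-hand-side contributions into $\psi^{D,*}_{d,\tau}$ and $\psi^{R,*}_{d,\tau}$. Your added remarks on the implicit dependence of the $\psi$ terms on $\tilde{y}_{d,\tau}$ through the optimal basis, and on treating the ramping-coupled day-ahead LP as a whole, are valid refinements but do not alter the approach.
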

\begin{proof}
    See Appendix \ref{Appendix A}.
\end{proof}


Equation \eqref{12} shows that the model parameter estimation at the upper-level problem is associated with the parameterized dual solutions provided by the lower-level problem. Therefore, the primal problems at the lower level are replaced with their dual problems, denoted as $\text{DP}^D(\Tilde{y}_{d,\tau}),\text{DP}^R(\Tilde{y}_{d,\tau},y_{d,\tau})$, such that the parameterized dual solutions can be obtained. Therefore,   the primal problems in \eqref{7d}-\eqref{7g} and \eqref{7h}-\eqref{7k} are replaced with 
\begin{equation}\label{13}
\begin{aligned}
    &\delta_{i,d,\tau}^*(\Tilde{y}_{d,\tau}),\underline{\bm{\eta}_{d,\tau}}^*(\Tilde{y}_{d,\tau}),\overline{\bm{\eta}_{d,\tau}}^*(\Tilde{y}_{d,\tau}),\lambda_{d,\tau}^*(\Tilde{y}_{d,\tau}) =\\ & \mathop{\arg\max}_{\delta_{i,d,\tau} \geq 0,\underline{\bm{\eta}_{d,\tau}} \geq 0,\overline{\bm{\eta}_{d,\tau}} \geq 0,\lambda_{d,\tau}}\text{DP}^D(\Tilde{y}_{d,\tau}),\\
    & \mu_{i,d,\tau}^*(\Tilde{y}_{d,\tau},y_{d,\tau}),\zeta_{i,d,\tau}^*(\Tilde{y}_{d,\tau},y_{d,\tau}),\nu_{d,\tau}^*(\Tilde{y}_{d,\tau},y_{d,\tau}) =\\ & \mathop{\arg\max}_{\mu_{i,d,\tau} \geq 0,\zeta_{i,d,\tau} \geq 0,\nu_{d,\tau}}\text{DP}^R(\Tilde{y}_{d,\tau},y_{d,\tau}).
\end{aligned}
\end{equation}

With the upper-level problem defined in \eqref{12} and the lower-level problem defined in \eqref{13}. The bilevel program in \eqref{7} can be equivalently written as,
\begin{subequations}\label{14}
\begin{alignat}{2}
     &\mathop{\min}_{\Theta_V} \quad
     &&  \frac{1}{D\cdot T}\sum_{d=1}^D\sum_{\tau=1}^T \eqref{12}\label{14a}\\
     & \text{s.t.} \quad &&  \Tilde{y}_{d,\tau}=g(\bm{s}_{d,\tau};\Theta_V)\\
     &&& 0 \leq \Tilde{y}_{d,\tau} \leq \bar{y}\\
    &&&   \eqref{13}.
\end{alignat}
\end{subequations}

As can be seen from the structure of the problem, where the lower-level problem (the follower) in \eqref{13} optimizes its objective treating the forecast $\Tilde{y}_{d,\tau}$ as the parameter, and the goal of the upper-level problem (the leader) is to determine $\Tilde{y}_{d,\tau}$ that minimizes the objective which involves the parameterized solutions returned by the lower level. 
A natural idea is to resort to an iterative solution for model parameter estimation. The details are given in Section \ref{Iterative Algorithm for Parameter Estimation}.

\subsection{Operational Forecasting Phase}\label{Operational Forecasting Phase}
With the estimated parameter $\hat{\Theta}_V$, the forecasting model is ready for use during operation. Usually at 12 pm on day $d-1$ when the decisions are made at the day-ahead stage, with the available information $\boldsymbol{s}_{d,\tau},\forall \tau=1,...,T$, we forecast the RES generation for each time slot $\tau$, i.e., 
\begin{equation*} \Tilde{y}_{d,\tau}=g(\boldsymbol{s}_{d,\tau};\hat{\Theta}_V), \ \tau = 1,\cdots, T.
\end{equation*}
Then, $\forall \tau=1,...,T$, $\Tilde{y}_{d,\tau}$ is fed into the day-ahead problem \eqref{2} as a parameter. The real-time problem in \eqref{3} is solved when the realization $y_{d,\tau}$ is available at time slot $\tau$ on day $d$. 

\section{Iterative Algorithm for Parameter Estimation}\label{Iterative Algorithm for Parameter Estimation}


We present an iterative learning strategy for the parameter estimation problem \eqref{14} in Section \ref{Solution Strategy}. In Section \ref{interpretate}, we explain why value-oriented forecasts are more preferable to VPP operators via the derived loss function.

\subsection{Iterative Learning Algorithm}\label{Solution Strategy}
In the iterative learning strategy, the upper level issues the forecast $\Tilde{y}_{d,\tau}^{e-1}=g(\bm{s}_{d,\tau};\Theta_V^{e-1})$ at the epoch $e$, where the model parameter is denoted as $\Theta_V^{e-1}$. Then, the forecast $\Tilde{y}_{d,\tau}^{e-1}$ is passed to the lower level. The lower level solves the dual problems \eqref{13} and passes the optimal dual solutions parameterized by the forecast $\Tilde{y}_{d,\tau}^{e-1}$ to the upper level. Such dual solutions update the coefficients in upper-level objectives. The upper level updates the model parameter with the updated objective. We denote the updated parameter at the epoch $e$ as $\Theta_V^{e}$.  The forecast $\Tilde{y}_{d,\tau}^e$ is then redetermined with the updated parameter $\Theta_V^{e}$, and passed to the lower level. Such a process repeats for the epochs $e=1,2,...,$. The parameter $\Theta^0_V$ in the first epoch is randomly initialied. We denote the updated parameter at the last epoch as $\hat{\Theta}_V$, which will be used in the operational forecasting phase. Next, we detail the upper-level problem in the iterative scheme.


\begin{figure*}[h]
  \centering
  \includegraphics[height=4cm]{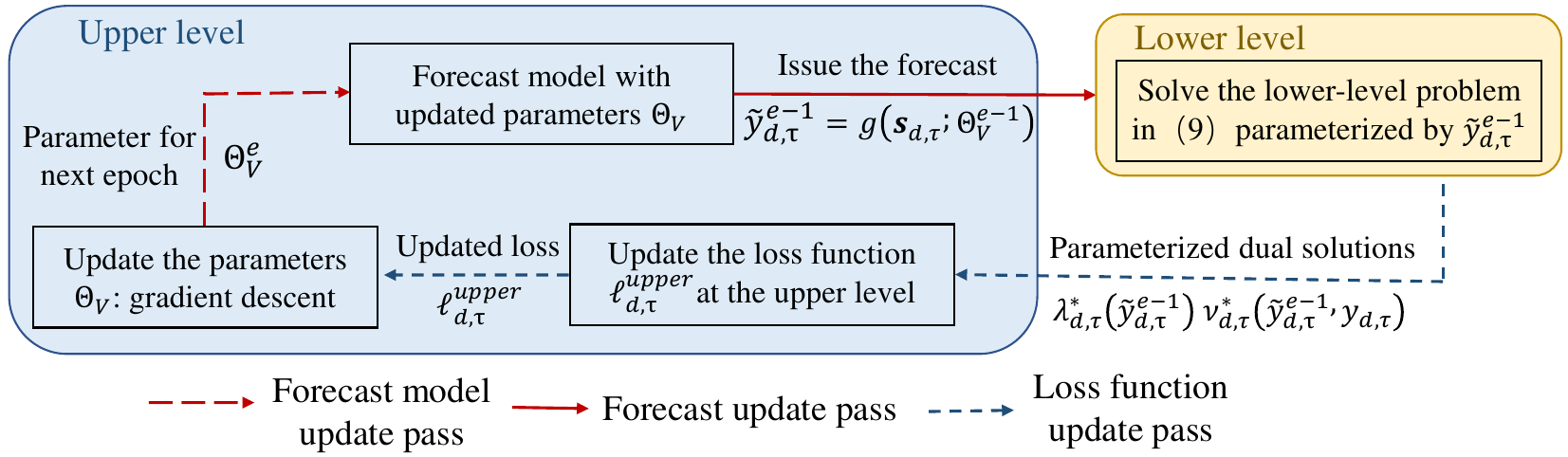}\\
  \caption{The illustration of iterative scheme at the epoch $e$.}\label{Fig 3}
\end{figure*}


The parameterized dual solutions provided by the lower-level problem at the epoch $e$ are denoted as  $\delta_{i,d,\tau}^*(\Tilde{y}_{d,\tau}^{e-1})$, $\underline{\bm{\eta}_{d,\tau}}^*(\Tilde{y}_{d,\tau}^{e-1})$, $\overline{\bm{\eta}_{d,\tau}}^*(\Tilde{y}_{d,\tau}^{e-1})$, $\lambda^*_{d,\tau}(\Tilde{y}_{d,\tau}^{e-1})$, $\mu^*_{i,d,\tau}(\Tilde{y}_{d,\tau}^{e-1},y_{d,\tau})$, $\zeta_{i,d,\tau}^*(\Tilde{y}_{d,\tau}^{e-1},y_{d,\tau})$, and $\nu^*_{d,\tau}(\Tilde{y}_{d,\tau}^{e-1},y_{d,\tau})$. These dual solutions are treated as the constants at the upper-level problem under the iterative scheme, since the relationship between them and the forecasts is a constant function in the neighborhood of the forecast $\Tilde{y}_{d,\tau}^{e-1}$ \cite{zheng2024distributionally,borrelli2003geometric}. Therefore, the last two terms $\psi^{D,*}_{d,\tau}$ and $\psi^{R,*}_{d,\tau}$ in \eqref{12} are constants as they only include the parameterized dual solutions. Thus, the upper-level objective in \eqref{12} can be further simplified by removing those terms, and the upper-level problem is reformulated as,
\begin{equation}\label{16}
\begin{aligned}
     &\mathop{\min}_{\Theta_V}
     &&
     \frac{1}{D\cdot T}\sum_{d=1}^D\sum_{\tau=1}^T-\lambda_{d,\tau}^*(\Tilde{y}_{d,\tau}^{e-1}) \Tilde{y}_{d,\tau}-\\
     &&&\nu_{d,\tau}^* (\Tilde{y}_{d,\tau}^{e-1},y_{d,\tau})(y_{d,\tau}-\Tilde{y}_{d,\tau})\\
     & \text{s.t.} &&  
 \Tilde{y}_{d,\tau}=g(\bm{s}_{d,\tau};\Theta_V)\\
 &&& 0 \leq  \Tilde{y}_{d,\tau} \leq \bar{y},
\end{aligned}
\end{equation}
whose objective is a linear function of $\Tilde{y}_{d,\tau}$ in the neighborhood where the coefficients $\lambda_{d,\tau}^*(\Tilde{y}_{d,\tau}^{e-1})$, $\nu_{d,\tau}^* (\Tilde{y}_{d,\tau}^{e-1},y_{d,\tau})$ remain unchanged. Such a locally linear objective acts as the value-oriented loss function for guiding the parameter estimation.
For a sample in the training set, we denote the value-oriented loss function as $\ell_{d,\tau}^{upper}$, i.e.,
\begin{equation}\label{loss_up}
\ell_{d,\tau}^{upper}:=-\lambda_{d,\tau}^*(\Tilde{y}_{d,\tau}^{e-1}) \Tilde{y}_{d,\tau}-\nu_{d,\tau}^* (\Tilde{y}_{d,\tau}^{e-1},y_{d,\tau})(y_{d,\tau}-\Tilde{y}_{d,\tau})  
\end{equation}


Therefore, the upper-level objective can be rewritten as $\frac{1}{D\cdot T}\sum_{d=1}^D\sum_{\tau=1}^T\ell_{d,\tau}^{upper}$.

With the defined upper- and lower-level problems, the training under the iterative scheme can be performed in a fixed number of epochs. 
Although various types of regression models can be used here as a forecasting model, we use neural network (NN) with batch optimization as an example, and its training paradigm is summarized in Algorithm \ref{alg1}. During training, with the defined loss function $\ell_{d,\tau}^{upper}$, the forecasting model parameter $\Theta_V$ is updated via gradient descent at the upper level in epoch $e$, i.e.,
\begin{equation}\label{gd}
    \Theta^e_V \leftarrow \Theta^{e-1}_V-\alpha \bigtriangledown_{\Theta_V}\frac{1}{B\cdot T}\sum_{d=1}^B\sum_{\tau=1}^T\ell^{upper}_{d,\tau}
\end{equation}

The forecasts are issued by the updated forecasting model and passed to the lower level. Then, the dual solutions are obtained from the lower level and update the loss function at the upper level. Such a process repeats for several epochs. An illustration is shown in Fig. \ref{Fig 3}.

\subsection{Interpretating Value-oriented Forecasts}\label{interpretate}

\textcolor{black}{We would like to interpret the forecasts with the designed loss function \eqref{loss_up}.}
\textcolor{black}{Its coefficients $\lambda_{d,\tau}^*(\Tilde{y}_{d,\tau}^{e-1})$, $\nu_{d,\tau}^* (\Tilde{y}_{d,\tau}^{e-1},y_{d,\tau})$ are the respective optimal dual solutions associated with the equality constraint in day-ahead and real-time problems. They can be interpreted as electricity prices and are associated with the day-ahead marginal cost and real-time marginal cost/utility. In the following, we omit the parentheses and denote them as $\lambda_{d,\tau}^*,\nu_{d,\tau}^*$ for simplicity.}

\textcolor{black}{The loss function in \eqref{loss_up} can be equivalently written as,}
\begin{equation}\label{reloss}
    -\lambda_{d,\tau}^* y_{d,\tau}+(\lambda_{d,\tau}^*-\nu_{d,\tau}^*)(y_{d,\tau}-\Tilde{y}_{d,\tau})
\end{equation}

\textcolor{black}{Since the first term $-\lambda_{d,\tau}^*y_{d,\tau}$ is a constant, minimizing \eqref{reloss} is equivalent with minimizing the second term,}
\begin{equation}\label{reloss1}
    (\lambda_{d,\tau}^*-\nu_{d,\tau}^*)(y_{d,\tau}-\Tilde{y}_{d,\tau})
\end{equation}
\textcolor{black}{where $\lambda_{d,\tau}^*-\nu_{d,\tau}^*$ can be interpreted as the difference between day-ahead marginal cost and real-time marginal cost/utility. The asymmetric real-time marginal cost/utility can cause the value $\lambda_{d,\tau}^*-\nu_{d,\tau}^*$ different in the cases of energy deficit and energy surplus, even if the day-ahead marginal cost takes the same value. For instance, when the real-time marginal cost of flexible resources addressing energy shortage (whose resulted real-time dual solution is denoted as $\nu_{d,\tau}^{+*}$) is larger than the real-time marginal utility of flexible resources addressing energy surplus (whose resulted real-time dual solution is denoted as $\nu_{d,\tau}^{-*}$) and satisfying $\nu_{d,\tau}^{+*} - \lambda_{d,\tau}^{*} \geq  \lambda_{d,\tau}^{*}-\nu_{d,\tau}^{-*} \geq 0$, the proposed approach tends to forecast less renewable energy. The reason is that per forecast deviation in the case of energy shortage ($y_{d,\tau} < \Tilde{y}_{d,\tau}$) results in more loss than in the case of energy surplus ($y_{d,\tau} > \Tilde{y}_{d,\tau}$). The quality-oriented forecasts predicting the expected realization treat the two cases equally, ignoring the different impacts on the cost. Therefore, the quality-oriented forecasts result in more cost and are less preferable to the operator.} 

\begin{algorithm}
	\caption{Training Phase of a NN-based forecasting model for Value-oriented Forecasting}
	\label{alg1}
	\begin{algorithmic}[1]
	\Require{Learning rate $\alpha$, batch size $B$, and initialized forecasting model parameters $\Theta_V^0$}
	\For{epoch $e=1,2,\dots$}
	
        \State{Sample a batch $\{\{\bm{s}_{d,\tau},y_{d,\tau}\}_{\tau=1}^T\}_{d=1}^B$ from the training set}
	    \State{Issue forecasts $\Tilde{y}_{d,\tau}^{e-1}=g(\bm{s}_{d,\tau};\Theta_V^{e-1})$, $\forall \tau=1,\cdots,T$, $d=1,\cdots,B$ via the forecasting model $g$.}
	  
	    \For{batch $d=1,2,\dots,B$}
	    \State{Obtain dual solutions $\{\lambda_{d,\tau}^*(\Tilde{y}_{d,\tau}^{e-1})\}_{\tau=1}^T$ and $\{\nu_{d,\tau}^*(\Tilde{y}_{d,\tau}^{e-1},y_{d,\tau})\}_{\tau=1}^T$ as the formula (\ref{13})}
	   \EndFor
	   \State{Update the loss function with $\{\{\lambda_{d,\tau}^*(\Tilde{y}_{d,\tau}^{e-1})\}_{\tau=1}^T\}_{d=1}^B$ and $\{\{\nu_{d,\tau}^*(\Tilde{y}_{d,\tau}^{e-1},y_{d,\tau})\}_{\tau=1}^T\}_{d=1}^B$ as the formula \eqref{loss_up}.}
    \State{Update model parameters via gradient descent \eqref{gd}.}
    \EndFor
\end{algorithmic} \end{algorithm}

\section{Experiement Setup}
This section considers the sequential deterministic operational problem faced by a VPP operator managing wind power and analyzes the effectiveness of value-oriented forecasting in this context. A thorough examination of performance in the test set is carried out, focusing on five key aspects: 1) 
We will demonstrate that our proposed approach leads to lower average operating costs compared to the quality-oriented forecasting approach, which employs MSE or pinball loss as the loss functions during training; 2) We will uncover the operational advantages under large wind power penetrations by investigating the performances at different levels of wind power capacities; 3) 
The computational efficiency of the proposed approach will be demonstrated by discussing the computation time in the test set; \textcolor{black}{4) We will discuss the potential of extending the proposed approach to operational problems involving binary variables; and 5) We will discuss the potential for the proposed approach to adapt to operational changes. Concretely, the changes are mainly related to parameter alterations rather than changes in the structure or type of the problem.}

\subsection{Description of Operation Simulations and Data}
In the context of the energy dispatch problem, we consider the operation of a centralized VPP operator with two slow-start distributed generators (DGs) and two flexible resources. DGs and wind power generators are considered at the day-ahead stage. During operation, the operator first solves the day-ahead problem \eqref{2}. After the ground truth of wind power generation is revealed, the real-time operation problem \eqref{3} is solved. The detailed model and parameters are given in \cite{News}. \textcolor{black}{In the real-time operation problem, if an energy surplus occurs (where the realization of wind power is greater than the forecast), the centralized operator will use flexible resources to absorb excessive power and obtain utility from the usage of electricity. Therefore, the real-time operating cost will be negative. If an energy shortage occurs (where wind power production is less than forecast), the centralized operator will leverage flexible resources to produce power that solves the shortage. Therefore, the real-time operating cost will be positive.}

The hourly wind power production in 2012 from GEFCom 2014 is used, along with the aggregated annual load data \textcolor{black}{from the Low Carbon London dataset, which has the valley demand of 50 \unit{kW}, the average demand of 56 \unit{kW}, and the peak demand of 70 \unit{kW}. Wind and load data are provided in \cite{News}}. 80\% data is divided into the training set, while the rest forms the test set. The wind data are scaled by multiplying a constant to fit the parameter setting of the case study. The contextual information consists of the predicted wind speed and direction at altitudes of 10 meters and 100 meters.

\subsection{Benchmark Models}
\textcolor{black}{In general, we use six benchmark models for comparison: two quality-oriented (\textbf{Qua-E} and \textbf{Qua-Q}), two value-oriented (\textbf{Val-L} and \textbf{Val-O}) forecasting models, perfect forecasts (\textbf{Per-F}), and a stochastic program (\textbf{Sto-OPT}).} 
\textcolor{black}{\begin{enumerate}
    \item Qua-E: The forecasting model is trained under MSE and predicts the expected wind power realization. Qua-E corresponds to the forecasting product in Section \ref{Conventional Operation}.
    \item Qua-Q: The forecasting model is trained under pinball loss (asymmetric loss function) and predicts quantile.
    \item Val-L: The value-oriented forecasting approach proposed by \cite{morales2023prescribing} is used as a benchmark, which requires the forecasting model to be linear.
    \item Val-O: The forecasting model trained via OptNet \cite{donti2017task} is used as another benchmark for value-oriented forecasting. 
    \item Per-F: The forecasts perfectly match the wind power realization. Since the forecasting error is inevitable, this is a very ideal benchmark. 
    \item Sto-OPT: We randomly generate 200 wind power scenarios by k-nearest-neighbors \cite{bertsimas2020predictive}. For each sample in the test set, the VPP operator first solves the two-stage stochastic program in \eqref{4} based on the scenarios, and then the real-time problem \eqref{3}. Sto-OPT corresponds to the approach described in Section \ref{Two-stage Stochastic Program based Dispatch}.
\end{enumerate}}

\textcolor{black}{We don't take the approach in Section \ref{Improved Schedule of RES} as the comparison candidate since the performance of this method cannot be better than the approach Sto-OPT in Section \ref{Two-stage Stochastic Program based Dispatch}; see the results in \cite{morales2014electricity}. Therefore, we only compare the proposed approach with the best candidate.} Since a specific type of forecasting model is not the main focus of the work, multi-layer perceptron (MLP) is used as the forecasting model for the proposed approach and benchmarks (Qua-E, Qua-Q, Val-O). The model hyper-parameters are summarized in Table \ref{Table 1}. 

\begin{table}[h]
\caption{Summary of MLP hyper-parameters}\label{Table 1}
\begin{center}
\begin{tabular}{ c  c }
\hline\hline
    Item & Value\\
\hline
    No. of hidden layers & 2\\
    No. of neurons in hidden layer & 256\\
    Dim. of contextual information & 4 \\
    Optimizer & Adam\\
    Learning rate & 1e-3\\
\hline\hline
\end{tabular}
\end{center}
\end{table}


\subsection{Verification Metrics}
We use the root mean square error (RMSE) and the average operating cost to measure the quality and value of the forecasts, respectively, both of which are negative oriented.

\subsubsection{Root Mean Square Error} It is defined as
\begin{equation}
    \text{RMSE} = \sqrt{\frac{1}{D_{test}\cdot T}\sum_{d=1}^{D_{test}}\sum_{\tau=1}^{T}(y_{d,\tau}-\hat{y}_{d,\tau})^2},
\end{equation}
where $D_{test}$ is the number of days in the test set.
\subsubsection{Average Operating Cost} It is defined as,
\begin{equation}\label{18}
\begin{aligned}
    &\frac{1}{D_{test}\cdot T}\sum_{d=1}^{D_{test}}\sum_{\tau=1}^T\bm{\rho}^\top\bm{x}_{d,\tau}^*+\bm{\rho}^{+\top}\bm{z}_{d,\tau}^{+*}-\bm{\rho}^{-\top}\bm{z}_{d,\tau}^{-*},
\end{aligned}
\end{equation}
where for the deterministic dispatch in Section \ref{Conventional Operation}, $
\boldsymbol{x}_{d,\tau}^*$ is the optimal solution of \eqref{2}, i.e., $\boldsymbol{x}_{d,\tau}^*(\hat{y}_{d,\tau})$, with parentheses omitted. For the two-stage stochastic program in Section \ref{Two-stage Stochastic Program based Dispatch}, $
\boldsymbol{x}_{d,\tau}^*$ is the optimal first-stage solution of \eqref{4}. For both the deterministic and two-stage stochastic dispatches, the real-time optimal solutions $\bm{z}_{d,\tau}^{+*},\bm{z}_{d,\tau}^{-*}$ are the ones from \eqref{3}, with parentheses omitted.

\section{Results and Discussion}
\subsection{The Operational Advantage}

In this analysis, the capacity of wind power is scaled to 40 kW, whose capacity is 57\% of that of the maximum demand. In the test set, we use RMSE and average operating cost, calculated by \eqref{18} (which is the summation of the average day-ahead operating cost and average real-time operating cost), as the evaluation metrics for quality and value, respectively. Here, we consider two quality-oriented forecasting approaches, i.e., Qua-E and Qua-Q. The nominal level of the quantile, which is an input to the pinball loss, is set as $\frac{2}{9}$.
Such nominal level is calculated by $\frac{\lambda^*-\nu^{-*}}{\nu^{+*}-\nu^{-*}}$ \cite{pinson2013wind}, where $\lambda^*,\nu^{-*},\nu^{+*}$ are the parameterized dual solutions with the subscript and the parentheses omitted, and are given as an ex-ante “oracle. Specifically, the values are chosen as the marginal generation cost of one of the slow-start generators and flexible resources, i.e., 30 \$/kW, 10 \$/kW, and 100 \$/kW, respectively.
The results of the value- and quality-oriented forecasting are reported in Table \ref{Table 2}. It clearly shows that accurate forecasting does not always benefit the operation. Qua-E achieves a lower RMSE score, which means the forecast shows more correspondence with the realization. Crafted to prioritize value, the proposed approach yields a significant 9.5\% operating cost reduction, compared to Qua-E. Also, Qua-Q is trained under pinball loss, and can partially consider the impact of operating costs and is suitable for the specific case where no operation constraints are considered \cite{pinson2013wind}. Therefore, it results in smaller operating costs, compared to Qua-E. Since it is unsuitable for the more general operation problem discussed in this work, its operating cost is still higher than the proposed one. The results manifest the operational advantage of our approach.

\begin{table}[h]
\caption{RMSE and average operating cost in the test set.}\label{Table 2}
\begin{center}
\resizebox{0.95\columnwidth}{!}{%
\begin{tabular}{ c  c  c  c  c}
\hline\hline
     & \makecell[c]{RMSE\\  $/kW$} & \makecell[c]{Average  \\ operating\\ cost/$\$$} & \makecell[c]{Average \\day-ahead\\ cost/$\$$} & \makecell[c]{Average \\real-time\\ cost/$\$$}\\
\hline
    \makecell[c]{Qua-E} & 7.2 & 39344 & 33191 & 6153\\
    \makecell[c]{Qua-Q} & 9.2 & 36009 & 37221 & -1212\\
        \makecell[c]{Proposed} & 11.1 & 35591 & 38671 & -3080\\
\hline\hline
\end{tabular}}
\end{center}
\end{table}


Fig. \ref{Fig4} displays the 4-day wind power forecast profiles of the value- and quality-oriented forecasting approaches. The real-time problem has a clear influence on value-oriented forecasting. Due to the higher loss for energy shortage than energy surplus, the proposed approach and Qua-Q tend to forecast less wind power production than Qua-E, to avoid the less profitable situation of underproduction, such that the energy shortage is less likely to happen. This point can be further demonstrated by the last two columns of Table \ref{Table 2}, which show that on average, the proposed approach has lower real-time operating cost than Qua-E and Qua-Q. Since the proposed approach tends to forecast less wind power (which has zero marginal cost in the day-ahead problem), the proposed approach has a larger day-ahead operating cost. However, thanks to co-minimizing the overall day-ahead and the real-time costs at the training phase, the proposed approach achieves lower costs for the overall operation.

\begin{figure}[h]
  \centering
  \includegraphics[scale=0.6]{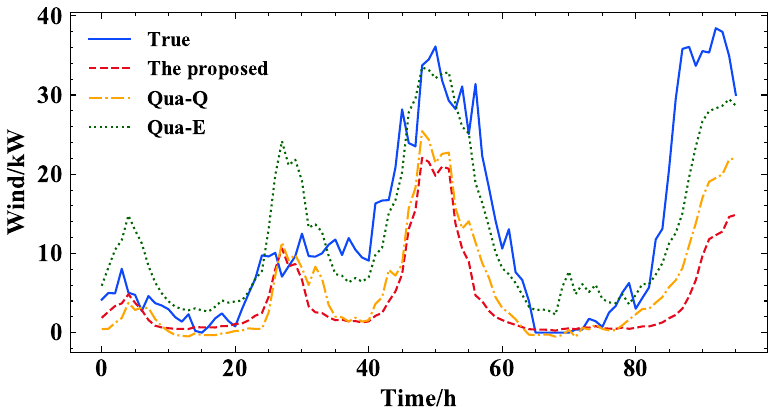}\\
  \caption{4-day wind power forecast profiles issued by the value- and quality-oriented forecasting approaches. Qua-E and Qua-Q stand for quality-oriented forecasts that issue the expectation and $\frac{2}{9}$ quantile of the wind power.}\label{Fig4}
\end{figure}


Furthermore, to demonstrate the advantage of the proposed iterative solution method, we compare it with Val-L \cite{morales2023prescribing} and Val-O \cite{donti2017task}. To accommodate the setting in  \cite{morales2023prescribing} where the ramping constraints are not considered in day-ahead, we omit the ramping constraint in \eqref{2(c)} for this comparison. The average operating cost in the test set is shown in Table \ref{juan}. Additionally, we present the average operating cost resulting from Qua-E. It shows that, due to the integration of more advanced forecasting models (NN v.s. linear model), our approach and Val-O achieve lower average cost, compared to Val-L. 
Additionally, despite employing a more sophisticated model, Qua-E has worse performance compared to Val-L, which employs a linear model. This underscores the necessity of aligning the forecasting model training with the decision value. The training time of neural network-based models is also reported in Table \ref{juan}. As solving the dual problems at the lower level and estimating the forecasting model parameters are iteratively performed, the proposed approach has a longer training time than Qua-E but much shorter than that of the Val-O based on OptNet. Since Val-O needs to calculate the inverse of the Jacobian matrix for the KKT conditions at each epoch, computational cost can be very high.

\begin{table}[h]
\caption{Average operating cost in the test set and training time of neural networks. The ramping constraint \eqref{2(c)} has been dropped from the day-ahead dispatch model in \eqref{2}.}\label{juan}
\begin{center}
\resizebox{0.95\columnwidth}{!}{%
\begin{tabular}{ c c  c  c c}
\hline\hline
& Proposed  & \makecell[c]{Val-L~\cite{morales2023prescribing}} & \makecell[c]{Val-O~\cite{donti2017task}} & \makecell[c]{Qua-E}\\
\hline
    \makecell[c]{Average\\ operating\\ cost} & 35598 \$ &35781 \$ & 35628 \$ & 38667 \$\\
\hline
    \makecell[c]{Training\\ time} & 3 min 50s  &- & 45 min  & 8.26 s \\
\hline\hline
\end{tabular}}
\vspace{-0.5cm}
\end{center}
\end{table}

\textcolor{black}{To make a more nuanced analysis, we identify the sensitive periods when our approach has more obvious improvement compared to Qua-E. The hourly cost reduction of the proposed approach compared to Qua-E is calculated. Also, via the derived loss function \eqref{reloss1}, we identify the factors that affect the hourly cost reduction. The coefficient $\lambda_{d,\tau}^*-\nu_{d,\tau}^*$ is the marginal contribution of the forecasting error $y_{d,\tau}-\Tilde{y}_{d,\tau}$ to the loss function. Therefore, in the following, we analyze the impact of $\lambda_{d,\tau}^*-\nu_{d,\tau}^*$ on the hourly cost reduction. We categorize the hourly $\lambda_{d,\tau}^*-\nu_{d,\tau}^*$ under the proposed forecasts into four intervals whose average values range from low to high. The average hourly cost reduction within the four intervals is calculated and provided in Fig. \ref{reduct}. It shows that the smaller the $\lambda_{d,\tau}^*-\nu_{d,\tau}^*$ is, the more cost reduction our approach can lead to. This is because, in the setting that the real-time marginal cost (the marginal cost of flexible resources addressing energy deficit) is larger than the real-time marginal utility (the marginal utility of flexible resources addressing energy surplus), our approach tends to forecast less wind power to avoid the undesirable energy deficit situation. That is, our approach tends to have $y_{d,\tau}-\Tilde{y}_{d,\tau}\geq 0$. When $\lambda_{d,\tau}^*-\nu_{d,\tau}^*$ is negative, such as the first three points, the term in \eqref{reloss1} is negative in average under the proposed forecasts. This leads to a reduction in the loss and a corresponding reduction in the operating cost, compared to Qua-E. Also, the smaller the $\lambda_{d,\tau}^*-\nu_{d,\tau}^*$ is, the more cost reduction can be expected.}

\begin{figure}[h]
  \centering
  \includegraphics[scale=0.6]{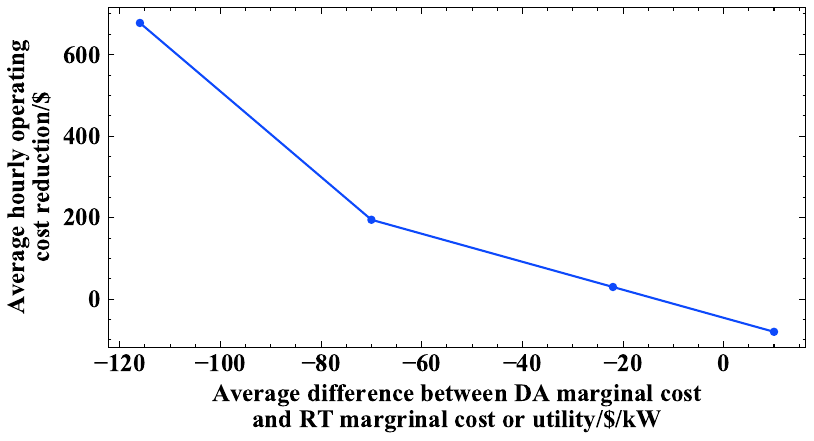}\\
  \caption{Average hourly cost reduction of the proposed approach compared to Qua-E under different levels of difference between DA marginal cost and RT marginal cost/utility.}\label{reduct}
\end{figure}



\subsection{ Sensitivity Analysis: Different Wind Power Penetrations}
This study investigates the performance of value-oriented forecasting under different levels of wind power capacities: 20, 30, and 40 kW. Under these penetration levels, the average operating cost of the proposed approach, Qua-E, and Per-F are summarized in Fig. \ref{Fig5}. 
It shows with the increase in wind power capacity, the average operating cost decreases as wind power, which has zero marginal cost, gradually contributes a larger share toward balancing the load. Also, the trend of cost reduction is more evident in the proposed approach, compared to Qua-E.
Furthermore, under different levels of wind power penetration, the proposed value-oriented forecasting has lower operating costs compared to Qua-E, and has comparable performance with Per-F. The cost reduction is more evident under large wind power capacity. The results show that the benefit of value-oriented forecasting is more significant under the higher penetration of renewable energy resources. 



\begin{figure}[h]
  \centering
  \includegraphics[scale=0.6]{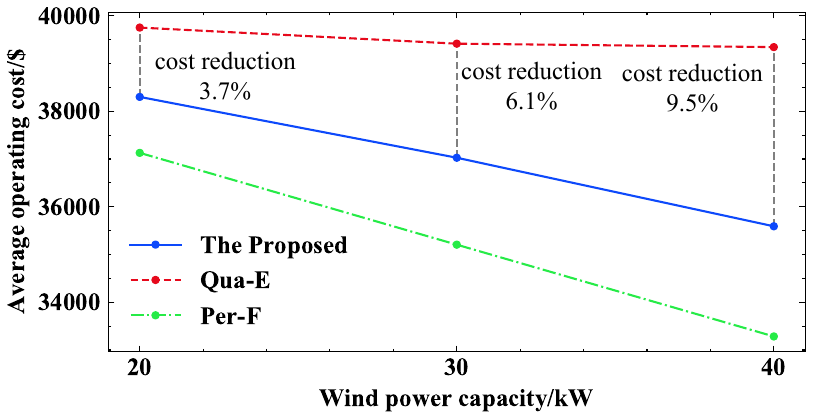}\\
  \caption{Average operating cost under different wind power capacities. Qua-E stands for quality-oriented forecasts predicting expected wind power. Per-F stands for perfect forecasts and therefore results in the lowest operating cost. \vspace{-0.5cm}}
  \label{Fig5}
\end{figure}

\subsection{Computational Complexity Comparison}




\textcolor{black}{Furthermore, to show the proposed approach is more computationally efficient than the two-stage stochastic program Sto-OPT, we compare the computation time of the two approaches in the test set. The scale of the Sto-OPT increases tremendously with the increase of the number of scenarios and the number of flexible resources whose dispatch is determined at the second stage. Therefore, to show the advantage in a larger-scale problem, we consider an operation problem with a greater number of flexible resources. Concretely, there are 10 flexible resources addressing energy shortage (whose capacity is 6 \unit{kW} each and marginal operating cost uniformly ranges from 90 \unit{\$/kW} to 120 \unit{\$/kW}), and 10 flexible resources addressing energy surplus (whose capacity is 6 \unit{kW} each and marginal operating utility uniformly ranges from 10 \unit{\$/kW} to 20 \unit{\$/kW}). The wind power capacity is set as 40 \unit{kW}.}  

\textcolor{black}{We compare the computation time and average operating cost of the two approaches in the test set. The results are in Table \ref{Table 6}. It shows that the proposed approach is much faster than the two-stage stochastic program while achieving a similar average operating cost.}

\begin{table}[h]
\caption{Computation time in the test set and average operating cost of the proposed approach and Sto-OPT.}\label{Table 6}
\begin{center}
\begin{tabular}{ c  c  c  }
\hline\hline
      & 
     \makecell[c]{Computation time/s} & \makecell[c]{Average  operating cost/\$}\\
\hline
    \makecell[c]{Proposed}  &
    3.39  &35846\\
\hline
    \makecell[c]{Sto-OPT}  &
    419  &35837\\
\hline\hline
\end{tabular}
\end{center}
\end{table}

\subsection{Extending to Operational Problems with Binary Variables}\label{ucc}

\textcolor{black}{We show the possibility of applying the proposed approach to the operational problem with binary variables. The day-ahead operation is a unit commitment problem where the binary variables indicate the on/off status of generators. The real-time problem dispatches flexible resources to address the forecast deviation, whose formulation is in \eqref{3}. At the training phase, the day-ahead unit commitment problem is relaxed to a linear program \cite{zhao2023uncertainty}, where the binary variables are relaxed to the continuous ones in the range of 0 and 1. We incorporate the relaxed day-ahead unit commitment and the real-time problem into the proposed approach for training a forecasting model in a value-oriented way. At the operational phase, we apply the issued forecast to serve as the input parameter to the unit commitment with binary variables and use the average overall cost in the test set as the evaluation metric. The mathematical programs of the unit commitment problem and the relaxed one are in Appendix \ref{Appendix B}.} 

\textcolor{black}{Considering different wind power capacities, we compare the average operating cost of the proposed approach and Qua-E. The results are given in Fig. \ref{uccomp}. It shows that the proposed approach can result in a lower average operating cost compared to Qua-E. The cost reduction is more obvious under the large penetration of wind power. Therefore, the proposed approach is effective for the unit commitment problem.}

\begin{figure}[h]
  \centering
  \includegraphics[scale=0.6]{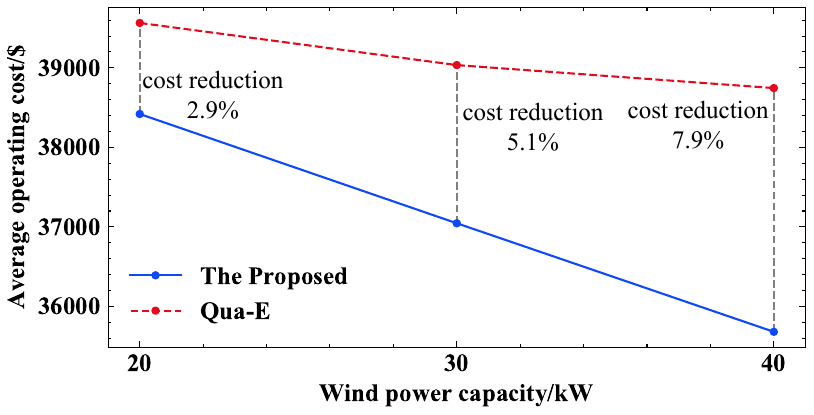}\\
  \caption{Average operating cost of day-ahead unit commitment and real-time energy dispatch problems under different wind power capacities. Qua-E stands for quality-oriented forecasts,  predicting the expected realization.  }
  \label{uccomp}
\end{figure}

\textcolor{black}{Furthermore, we note that the performance can be improved by relaxing the mixed integer program into a linear program in a more precise manner. For instance, \cite{ferber2020mipaal} proposes to generate an exact continuous surrogate for the original discrete optimization problem by adding cutting planes. In this way, the mixed integer program is equivalently transformed into a linear program, which fits the proposed approach. We would like to leave this as the future work.}

\subsection{Adaptability to Operational Change}

\textcolor{black}{In this section, we test our approach under operational changes.} 
Specifically, we consider the case where the parameters in the operational problems are fixed during the training phase. After training, a different set of parameters is used in the operational problems to test the forecasting performance. We provide an example of training the forecasting model with fixed marginal costs in day-ahead and in real-time, but using different real-time marginal costs at the operational phase. 

\textcolor{black}{Two different settings of real-time marginal costs $\bm{\rho}^+=[\rho_1^+,\rho_2^+]^\top$ are considered at the operational phase. In the first setting, the real-time marginal costs $\rho^+_1,\rho^+_2$  are still larger than the real-time marginal utility $\rho^-_1$, but take different values from the training phase. In the second setting, the real-time marginal costs $\rho^+_1,\rho^+_2$  are smaller than the real-time marginal utility $\rho^-_1$. We note that only the real-time marginal costs $\rho^+_1,\rho^+_2$ change at the test phase, and other parameters remain the same as those at the training phase. In Table \ref{setting}, we list the real-time marginal costs $\rho^+_1,\rho^+_2$ of the two settings. In these two settings, we compare the average operating cost in the test set of the proposed approach and Qua-E. The results are in Table \ref{Table 5}.}
\begin{table}[h]
\caption{Real-time marginal cost $\rho^+_1,\rho^+_2$ in the high-cost and low-cost settings.}\label{setting}
\begin{center}
\begin{tabular}{  c  c  c }
\hline\hline
       
     & $\rho^+_1$ & 
     $\rho^+_2$\\
\hline
    High-cost setting& 90 \$/kW&
    100 \$/kW\\
\hline
    Low-cost setting& 5 \$/kW&
    15 \$/kW\\
\hline\hline
\end{tabular}
\end{center}
\end{table}

       

\begin{table}[h]
\caption{Average operating cost of the proposed approach and Qua-E in the high-cost and low-cost settings.}\label{Table 5}
\begin{center}
\begin{tabular}{ c  c  c }
\hline\hline
       
     & Proposed & 
     Qua-E\\
\hline
    High-cost setting& 35511 \$&
    37905 \$\\
\hline
    Low-cost setting& 27578 \$&
    24881 \$\\
\hline\hline
\end{tabular}
\end{center}
\end{table}

\textcolor{black}{It shows that in the high-cost setting, although the values of the marginal costs at the test phase are different from those at the training phase, our approach can adapt to the change, and achieves a lower average cost compared to Qua-E. However, in the low-cost setting, the proposed approach doesn't perform well and has an even larger cost compared to Qua-E. Therefore, in this case, the adaptability of the proposed approach highly depends on how the parameters change. There is no guarantee that the proposed approach can consistently adapt to these changes effectively. A potential solution is considering the cost change at the training phase. We would like to leave this as future work.}

\section{Conclusions}
This paper introduces a novel value-oriented point forecasting methodology tailored for VPP operators who possess renewable energy sources and solve the energy dispatch problems in sequential order. With the forecasts, operators are allowed to solve a computationally efficient deterministic problem during the day-ahead operation, while still attaining low expected operating costs of the day-ahead and real-time stages. To achieve this, we formulate the parameter estimation for the forecasting model as a bilevel program at the training phase. 
An iterative learning strategy is proposed to solve it, where the loss function aligning with the decision value is derived. For the operation problems in the form of LP, such a function exhibits local linearity with regard to forecasts.

Numerical outcomes from the test set reveal that, in comparison to the quality-oriented forecasting methods, the proposed approach yields a larger forecasting error measured by RMSE. However, it manages to attain a reduced average overall operating cost. This advantage becomes more sound, particularly under high levels of wind power integration. The computational efficiency is also demonstrated through a comparison of computation time with a stochastic program. Additionally, we demonstrate the potential of applying the proposed approach to operational problems involving binary variables and adapting to operational changes.

We note that the goal of this study is not to replace the quality-oriented forecasting approach but rather to provide tools for enhanced RES forecasts tailored to specific operational problems to achieve lower operational costs. In the future, we plan to develop value-oriented forecasts that can better adapt to operational changes and applicable to operation problems with network constraints.


\section*{Acknowledgments}

The authors would like to thank Professor Pierre Pinson at Imperial College London for constructive discussion. 

\appendices
\section{Proof of Proposition 1}\label{Appendix A}

Consider the linear programs \eqref{7d}-\eqref{7g} and \eqref{7h}-\eqref{7k}. We derive the dual problems for them, respectively. Taking the problem \eqref{7d}-\eqref{7g} for instance, we rewrite it as
\begin{subequations}\label{22}
\begin{alignat}{2}
 &\mathop{\min}_{\{\bm{x}_{d,\tau}\}_{\tau=1}^T}
 && \quad \sum_{\tau=1}^T\bm{\rho}^\top  \bm{x}_{d,\tau}\label{23(a)}\\
    & \text{s.t.} &&  \quad \bm{A}_D\bm{x}_{d}\leq \bm{b}_D:\bm{\chi}_d\label{22(b)}\\
    &&& \quad \bm{1}^\top  \bm{x}_{d,\tau}+\Tilde{y}_{d,\tau}=l_{d,\tau}:\lambda_{d,\tau},\forall \tau=1,...,T,\label{22(c)}
\end{alignat}
\end{subequations}
where $\bm{\rho},\bm{A}_D,\bm{b}_D$ are the coefficients, $\bm{x}_d=[\bm{x}_{d,\tau}]_{\tau=1,...,T}$, $\bm{\chi}_d=[[\delta_{i,d,\tau}]_{i\in I^D,\tau=1,...,T},[\underline{\bm{\eta}_{d,\tau}},\overline{\bm{\eta}_{d,\tau}}]_{\tau=1,...,T-1}]$. The Lagrangian function of \eqref{22} can be derived as
\begin{equation}\label{23} \bm{\chi}_d^\top(\bm{A}_D\bm{x}_{d}- \bm{b}_D)+\sum_{\tau=1}^T\bm{\rho}^\top \bm{x}_{d,\tau}+\lambda_{d,\tau}(l_{d,\tau}-\bm{1}^\top  \bm{x}_{d,\tau}-\Tilde{y}_{d,\tau}).
\end{equation}

Based on the Lagrangian function \eqref{23}, the objective of the dual problem is
\begin{equation}\label{24}
 -\bm{\chi}_d^\top \bm{b}_D+\sum_{\tau=1}^T-\lambda_{d,\tau}  \Tilde{y}_{d,\tau} +  \lambda_{d,\tau}  l_{d,\tau}. 
\end{equation}

Likewise, we rewrite the linear program \eqref{7h}-\eqref{7k} as
\begin{subequations}\label{25}
\begin{alignat}{2} 
 &\{\mathop{\min}_{\Tilde{\bm{z}}_{d,\tau}}&& \quad \bm{c}_R^{\top}  \Tilde{\bm{z}}_{d,\tau}\label{25(a)}\\
    & \text{s.t.} &&  \quad \bm{A}_R\Tilde{\bm{z}}_{d,\tau}\leq \bm{b}_R:\bm{\iota}_{d,\tau}\label{25(b)}\\
    &&& \quad \bm{d}_R^\top \Tilde{\bm{z}}_{d,\tau}+y_{d,\tau}-\Tilde{y}_{d,\tau}=0:\nu_{d,\tau}\},\forall \tau=1,...,T,\label{25(c)}
\end{alignat}
\end{subequations}
where $\bm{c}_R,\bm{A}_R,\bm{b}_R,\bm{d}_R$ are the coefficients. $\Tilde{\bm{z}}_{d,\tau}=[\Tilde{\bm{z}}_{d,\tau}^+;\Tilde{\bm{z}}_{d,\tau}^-]$, $\bm{c}_R=[\bm{\rho}^+;-\bm{\rho}^-]$ $\bm{d}_R=[\bm{1};-\bm{1}]$ and $\bm{\iota}_{d,\tau}=[[\mu_{i,d,\tau}]_{i \in I^{R+}};[\zeta_{i,d,\tau}]_{i \in I^{R-}}]$.
The objective of dual problem for \eqref{7h}-\eqref{7k} can be derived in a similar way
\begin{equation}\label{26}
 -\nu_{d,\tau}  (y_{d,\tau}-\Tilde{y}_{d,\tau}) -\bm{\iota}_{d,\tau}^\top \bm{b}_R,\forall \tau=1,...,T. 
\end{equation}

The summation of the dual objectives in \eqref{24} and \eqref{26} over $T$ hours consists of the term $\sum_{\tau=1}^T-\lambda_{d,\tau} \Tilde{y}_{d,\tau}-\nu_{d,\tau}(y_{d,\tau}-\Tilde{y}_{d,\tau})$, with the remaining term only regarding the dual variables $\lambda_{d,\tau},\nu_{d,\tau},\bm{\chi}_d,\bm{\iota}_{d,\tau}$. For each sample indexed by $d,\tau$, by plugging the parameterized dual solutions $\lambda_{d,\tau}^*(\Tilde{y}_{d,\tau})$, $\nu_{d,\tau}^*(\Tilde{y}_{d,\tau},y_{d,\tau})$, $\delta_{i,d,\tau}^*(\Tilde{y}_{d,\tau})$,$\underline{\bm{\eta}_{d,\tau}^*}(\Tilde{y}_{d,\tau}),\overline{\bm{\eta}_{d,\tau}}^*(\Tilde{y}_{d,\tau})$, $\mu_{i,d,\tau}^*(\Tilde{y}_{d,\tau},y_{d,\tau})$ and $\zeta_{i,d,\tau}^*(\Tilde{y}_{d,\tau},y_{d,\tau})$ into the objective of the dual problems in \eqref{24},\eqref{26}, the summation of the optimal dual objectives for the sample indexed by $d,\tau$ is
\begin{equation}\label{32}
-\lambda_{d,\tau}^*(\Tilde{y}_{d,\tau}) \Tilde{y}_{d,\tau}-\nu_{d,\tau}^* (\Tilde{y}_{d,\tau},y_{d,\tau})(y_{d,\tau}-\Tilde{y}_{d,\tau})+ \psi^{D,*}_{d,\tau}+\psi^{R,*}_{d,\tau}.
\end{equation}
where $\psi^{D,*}_{d,\tau}$ and $\psi^{R,*}_{d,\tau}$ represent the remaining terms related with the dual solutions. \eqref{32} is equivalent with the term in the curly bracket of \eqref{7a} by the strong duality theory, which ends the proof.

\section{Mathematical Formulation of Unit Commitment Problems in Section \ref{ucc}}\label{Appendix B}

We give the unit commitment problem in day-ahead,
\begin{subequations}\label{uc}
\begin{alignat}{2} 
&\mathop{\min}_{\{\bm{x}_{d,\tau},\bm{u}_{d,\tau}\}_{\tau=1}^T}
  &&\sum_{\tau=1}^{T}\bm{\rho}^\top \bm{x}_{d,\tau}+\bm{\rho}_0^\top\bm{u}_{d,\tau}\label{uca}\\ 
& \text{s.t.}
&& 
\bm{0} \leq \bm{x}_{d,\tau} \leq \overline{\bm{x}} \circ \bm{u}_{d,\tau},\forall \tau=1,...,T \label{uc(b)}
    \\ 
&&& 
 \bm{x}_{d,\tau+1}-\bm{x}_{d,\tau} \leq \bm{r} \circ \bm{u}_{d,\tau+1}, 
\forall \tau=1,...,T-1\label{uc(c)}\\
&&& 
 -\bm{r} \circ \bm{u}_{d,\tau} \leq \bm{x}_{d,\tau+1}-\bm{x}_{d,\tau}, 
\forall \tau=1,...,T-1\label{uc(d)}\\
    &&&  \bm{1}^\top\bm{x}_{d,\tau}+\hat{y}_{d,\tau}=l_{d,\tau},\forall \tau=1,...,T \label{uc(e)}
\end{alignat}
\end{subequations}
where $\circ$ is the elementwise multiplication. $\bm{\rho}$ is the marginal generation cost and $\bm{\rho}_0$ is the start-up cost. $\bm{x}_{d,\tau}$ is the generation schedule at time $\tau$ on day $d$. $\bm{u}_{d,\tau}$ is the binary variable indicating on/off status. $\bm{r}$ is the ramping limit. $\hat{y}_{d,\tau}$ and $l_{d,\tau}$ are the wind power forecast and load demand. By relaxing the binary variable $\bm{u}_{d,\tau}$ to be a continuous variable in the range of 0 and 1, the relaxed unit commitment problem is as follows,
\begin{subequations}\label{relaxeduc}
\begin{alignat}{2} 
&\mathop{\min}_{\{\bm{x}_{d,\tau},\bm{u}_{d,\tau}\}_{\tau=1}^T}
  &&\sum_{\tau=1}^{T}\bm{\rho}^\top \bm{x}_{d,\tau}+\bm{\rho}_0^\top\bm{u}_{d,\tau}\\ 
& \text{s.t.}
&& 
\eqref{uc(b)},\eqref{uc(c)},\eqref{uc(d)},\eqref{uc(e)}
    \\ 
&&& \bm{0} \leq \bm{u}_{d,\tau} \leq \bm{1}
\end{alignat}
\end{subequations}
which is a linear program.

\bibliographystyle{IEEEtran}
\bibliography{IEEEabrv,mylib}

\end{document}